\title{Minimizing Expected Cost Under Hard Boolean Constraints, with Applications to Quantitative Synthesis} 
\titlerunning{Minimizing Expected Cost Under Hard Boolean Constraints} 
\authorrunning{S. Almagor, O. Kupferman, and Y. Velner} 
\author{Shaull Almagor \and Orna Kupferman \and Yaron Velner}
\institute{School of Computer Science and Engineering, The Hebrew University, Israel.}
\newtheorem{xmpl}[theorem]{Example}
\renewenvironment{example}{\begin{xmpl}\rm}{\end{xmpl}}
\def\squarebox#1{\hbox to #1{\hfill\vbox to #1{\vfill}}}
\newcommand{\short}[1]{}
\newcommand{\set}[1]{{\{#1\}}}
\newcommand{\Nat}{\mathbb{N}}
\newcommand{\zug}[1]{\langle #1  \rangle}
\newcommand{\LTL}{{\ensuremath{\rm LTL}}\xspace}
\newcommand{\Next}{\mathsf{X}}
\newcommand{\Ev}{\mathsf{F}}
\newcommand{\Alw}{\mathsf{G}}
\newcommand{\T}{{\mathcal T}}
\newcommand{\DPW}{\mbox{\rm DPW}\xspace}
\newcommand{\UPW}{\mbox{\rm UPW}\xspace}
\newcommand{\stam}[1]{}
\newcommand{\C}{{\cal C}}
\newcommand{\A}{{\cal A}}
\newcommand{\D}{{\cal D}}
\newcommand{\G}{{\cal G}}
\newcommand{\M}{{\cal M}}
\newcommand{\U}{{\cal U}}
\renewcommand{\P}{{\mathtt P}}
\newcommand{\MDP}{{\mathtt{MDP}}}
\renewcommand{\phi}{\varphi}
\newcommand{\maxs}[1]{\max\set{#1}}
\newcommand{\Inf}{\mbox{inf}}
\newcommand{\sen}{{\it sensed}}
\newcommand{\scost}{{\it scost}}
\newcommand{\stcost}{{\it scost}}
\newcommand{\cost}{{\it cost}}
\newcommand{\tIN}{{2^I}}
\newcommand{\tOUT}{{2^O}}
\newcommand{\tINs}{{{(2^I)}^*}}
\newcommand{\tINo}{{{(2^I)}^\omega}}
\newcommand{\tOUTo}{{{(2^O)}^\omega}}
\newcommand{\Act}{A}
\newcommand{\MDPProb}{{\rm P}}
\newcommand{\MDPcost}{{\it cost}}
\newcommand{\start}{{\textsc{start}}}
\newcommand{\buchi}{B\"uchi\xspace}
\newcommand{\costs}{{\it cost}_{\rm sure}}
\newcommand{\costsf}{{\it cost}_{\rm sure,<\infty}}
\newcommand{\gec}{{GEC}\xspace}
\newcommand{\sgec}{{SGEC}\xspace}
\newcommand{\gecs}{{GECs}\xspace}
\newcommand{\sgecs}{{SGECs}\xspace}
\newcommand{\attr}{{\rm Attr}}
\newcommand{\env}{{\rm env}}
\newcommand{\sys}{{\rm sys}}
\newcommand{\me}{{\substack{\rm max\\ \rm even}}}
\begin{document}
\maketitle
\begin{abstract}
In Boolean synthesis, we are given an \LTL specification, and the goal is to construct a transducer that realizes it against an adversarial environment. 
Often, a specification contains both Boolean requirements that should be satisfied against an adversarial environment, and multi-valued components that refer to the quality of the satisfaction and whose expected cost we would like to minimize with respect to a probabilistic environment. 

In this work we study, for the first time, mean-payoff games in which the system aims at minimizing the expected cost against a probabilistic environment, while surely satisfying an $\omega$-regular condition against an adversarial environment.
We consider the case the $\omega$-regular condition is given as a parity objective or by an \LTL formula.
We show that in general, optimal strategies need not exist, and moreover, the limit value cannot be approximated by finite-memory strategies. 
We thus focus on computing the limit-value, and give tight complexity bounds for synthesizing $\epsilon$-optimal strategies for both finite-memory and infinite-memory strategies.

We show that our game naturally arises in various contexts of synthesis with Boolean and multi-valued objectives. Beyond direct applications, in synthesis with costs and rewards to certain behaviors, it allows us to compute the minimal sensing cost of $\omega$-regular specifications -- a measure of quality in which we look for a transducer that minimizes the expected number of signals that are read from the input.
\end{abstract}

\section{Introduction}
\label{sec:intro}
{\em Synthesis\/} is the automated construction of a system from its specification: given a linear temporal logic (LTL)  formula $\psi$ over sets $I$ and $O$ of input and output signals, we synthesize a 
system that {\em realizes\/} $\psi$ \cite{Chu63,PR89a}. At each moment in time, the system reads a truth assignment, generated by the environment, to the signals in $I$, and it generates a truth assignment to the signals in $O$. Thus, with every sequence of inputs, the system associates a sequence of outputs. 
The system realizes $\psi$ if all the computations that are generated by the interaction satisfy $\psi$. 

One weakness of automated synthesis in practice is that it pays no attention to the quality of the synthesized system. Indeed, the classical setting is Boolean: a computation satisfies a specification or does not satisfy it. Accordingly, while the synthesized system is correct, there is no guarantee about its quality. This is a crucial drawback, as designers would be willing to give-up manual design only if automated-synthesis algorithms return systems of comparable quality. 
In recent years, researchers have considered extensions of the classical Boolean setting to a quantitative one, which takes quality into account. Quality measures can refer to the system itself, examining parameters like its size or its consumption of memory, sensors, voltage, bandwidth, etc., or refer to the way the system satisfies the specification. In the latter, we allow the designer to specify the quality of a behavior using quantitative specification formalisms~\cite{ABK16,BMM14,AFHMS05}. 
For example, rather than the Boolean specification requiring all requests to be followed by a grant, a quantitative specification formalism would give a different satisfaction value to a computation in which requests are responded immediately and one in which requests are responded after long delays.\footnote{Note that the polarity of some quality measures is negative, as we want to minimize size, consumption, costs, etc., whereas the polarity of other measures is positive,  as we want to maximize performance and satisfaction value. For simplicity, we assume that all measures are associated with costs, which we want to minimize.}

Solving the synthesis problem in the Boolean setting amounts to solving a two-player zero-sum game between the system and the environment. The goal of the system is to satisfy the (Boolean) specification, and the environment is adversarial. Then, a winning strategy for the system corresponds to a transducer that realizes the specification. 
In the quantitative setting, the goal of the system is no longer Boolean, as every play is assigned a cost by the specification. In the classical quantitative approach, we measure the satisfaction value in the worst-case semantics. Thus, the value of a strategy for the system is the maximal cost of a play induced by this strategy, and the goal of the system is to minimize this value. Recently, there is a growing interest also in the expected cost of a play, under a probabilistic environment. The motivation behind this approach is that the quality of satisfaction is a ``soft constraint'', and should not be measured in a worst-case semantics. Then, the game above is replaced by a mean-payoff Markov Decision Process (MDP): a game in which each state has a cost, inducing also costs to infinite plays (essentially, the cost of an infinite play is the limit of the average cost of prefixes of increased lengths). The goal is to find a strategy that minimizes the expected cost \cite{CKK15,CR15}.

While quantitative satisfaction refines the Boolean one, often a specification contains both Boolean conditions that should be satisfied against all environments, and multi-valued components that refer to the quality of the satisfaction and whose expectation we would like to minimize with respect to a probabilistic environment. Accordingly, 
researchers have suggested the {\em beyond worst-case} approach, where a specification has both hard and soft constraints, and the goal is to realize the hard constraints, while maximizing the expected satisfaction value of the soft constraints. In Section~\ref{related} below, we describe this approach and related work in detail.

In this work, we consider, for the first time, mean-payoff MDPs equipped with a parity winning condition (parity-MDPs, for short). The goal is to find a strategy that surely wins the parity game (that is, against an adversarial environment), while minimizing the expected cost of a play against a probabilistic environment. 
While the starting point in earlier related work is the MDP itself, possibly augmented by different objectives, 
our starting point depends on the application, and we view the construction of the MDP as an integral part of our contribution. We focus on two applications: synthesis with penalties to undesired scenarios and synthesis with minimal sensing.

Let us describe the two applications. We start with penalties to scenarios.
\stam{
Consider an $\LTL$ specification $\psi$ over $I$ and $O$. Activating an output signal $p \in O$ may have a cost; for example, when the activation involves a use of a resource. Taking these costs into account, the input to the synthesis problem contains, in addition to $\psi$, a cost function $\gamma:O \rightarrow \Nat$, where $\gamma(p)$ is the cost of activating the signal $p$. Accordingly, each assignment $o \subseteq O$ has a cost $\gamma(o)$, obtained by summing the costs of the signals in $o$ (alternatively, one could consider costs of subsets of output signals that need not be decomposable). The cost of a computation $(i_1,o_1),(i_2,o_2),...\in (2^{I\cup O})^\omega$ is then the mean cost $\limsup_{n\to \infty}\frac{1}{n}\sum_{i=1}^n \gamma(o_i)$. While the specification $\psi$ is a hard constraint, as we only allow correct computations, minimizing the expected cost of computations with respect to $\gamma$ is a soft constraint. We show how the setting can be easily translated into solving our parity-MDPs. 

The function $\gamma$ above enable penalties to certain assignments, which correspond to scenarios of length $1$. 
More elaborated cost functions refer to on-going regular scenarios. 
Power consumption, for example, is an important consideration in modern chip design, from portable servers
to large server farms. As the chips become more complex, the cost of powering a server
farm can easily outweigh the cost of the servers themselves, thus design teams go to great
lengths in order to reduce power consumption in their designs.
The most widely researched logical power saving techniques are \emph{clock gating}, in which a clock is prevented from making a ``tick'' if it is redundant (c.f., \cite{ARY09}), and \emph{power gating}, in which whole sections of the chip are powered off when not needed and then powered on again \cite{KFAGS07,ENY09}. The goal of these techniques is to reduce the number of changes in the values of signals, the main source of power consumption in chips. 
We show how we can augment the MDP so that regular scenarios,  like changes in the values of signals, are detected, inducing costs in the MDP.
}
Consider an $\LTL$ specification $\psi$ over $I$ and $O$. Activating an output signal may have a cost; for example, when the activation involves a use of a resource. Taking these costs into account, the input to the synthesis problem includes, in addition to $\psi$, a cost function $\gamma$ assigning cost to some assignments to output signals. The cost of a computation is then the mean cost of assignments in it. While the specification $\psi$ is a hard constraint, as we only allow correct computations, minimizing the expected cost of computations with respect to $\gamma$ is a soft constraint. 
Assignments correspond to scenarios of length one. More elaborated cost functions refer to on-going regular scenarios. 
Power consumption, for example, is an important consideration in modern chip design, from portable servers
to large server farms. As the chips become more complex, the cost of powering a server
farm can easily outweigh the cost of the servers themselves, thus design teams go to great
lengths in order to reduce power consumption in their designs.
The most widely researched logical power saving techniques are \emph{clock gating}, in which a clock is prevented from making a ``tick'' if it is redundant (c.f., \cite{ARY09}), and \emph{power gating}, in which whole sections of the chip are powered off when not needed and then powered on again \cite{KFAGS07,ENY09}. The goal of these techniques is to reduce power consumption and the number of changes in the values of signals, the main source of power consumption in chips. 
The input to the problem of synthesis with penalties to scenarios includes, in addition to $\psi$, a set of deterministic automata on finite words, each describing a undesired scenario and its cost. For example, it is easy to specify the scenario of ``value flip" with a two-state deterministic automaton.
We show how the setting can be easily translated into solving our parity-MDPs, thus generating systems that realize $\psi$ with minimal expected cost. 

Our primary application considers activation of sensors. The quality measure of sensing was introduced in ~\cite{AKK14,AKK15}, as a measure for the detail with which a random input word needs to be read in order to realize the specification. In the context of synthesis, our goal is to construct a transducer that realizes the specification and minimizes the expected average number of sensors (of input signals) that are used along the interaction.
Thus, the hard constraint in the LTL specification, and the soft one is the expected number of active sensors. 
Giving up sensing has a flavor of synthesis with incomplete information \cite{KV00a}: the transducer has to realize the specification no matter what the incomplete information is. Thus, as opposed to the examples above, the modeling of cost involves a careful construction of the MDP to be analyzed, and also involves an exponential blow-up, which we show to be unavoidable.  In \cite{AKK15}, the problem was solved for safety specifications. Our solution to the parity-MDP problem enables a solution for full LTL. We also study the complexity of the problem when the input is an LTL formula, rather than a deterministic automaton. 

Back to parity-MDPs, we show that in general, optimal strategies need not exist. That is, there are parity-MDPs in which an infinite-state strategy can get arbitrarily close to some limit optimal value, but cannot attain it. Moreover, the limit value cannot be approximated by finite-memory strategies.  Accordingly, our solution to parity-MDPs suggests two algorithms. The first, described in Section~\ref{sec:infinite memory}, finds the limit value of all possible strategies, which corresponds to infinite-state transducers. The second, described in Section~\ref{sec:finite memory}, computes the limit value over all finite-memory strategies. 
The complexity of both algorithms is NP$\cap$coNP. Moreover, they are computable in polynomial time when an oracle to a two-player parity game is given.
Hence, our complexity upper bounds match the trivial lower bounds that arise from the fact that every solution to a parity-MDP is also a solution to a parity game.
For our applications, we show that the complexity of the synthesis problem for LTL specifications stays doubly-exponential, as in the Boolean setting, even when we minimize penalties to undesired scenarios or minimize  sensing.

\vspace*{-10pt}
\subsection{Related Work}
\label{related}
The combination of worst-case synthesis with expected-cost synthesis, dubbed {\em beyond worst-case synthesis\/}, was studied in~\cite{BFRR14a,CR15} for models that are closely related to ours. 
In~\cite{BFRR14a} the authors study mean-payoff MDPs, where both the hard constraints and the soft constraints are quantitative. Thus, a system needs to ensure a strict upper bound on the mean-payoff cost, while minimizing the expected cost. 
In~\cite{CR15}, multidimensional mean-payoff MDPs are considered. Thus, the MDP is equipped with several mean-payoff costs, and the goal is to find a system that ensures the mean-payoff in some of the mean-payoffs is below an upper bound, while minimizing the expected mean-payoffs (or rather, approximating their Pareto-curve).
 
In comparison, our work is the first to consider a hard Boolean constraint (namely the parity condition). This poses both a conceptual and a technical difference. Conceptually, when quantitative synthesis is taken as a refinement of Boolean synthesis, it is typically meant as a ranking of different systems that satisfy a Boolean specification. Thus, it makes sense for the hard constraint to be Boolean as well. 
Technically, combining Boolean and quantitative constraints gives rise to some subtleties that do not exist in the pure-quantitative setting. Specifically, when both the hard and the soft constraints are quantitative, a strategy can intuitively ``alternate'' between satisfying them. Thus, if while trying to meet the soft constraint the hard constraint is violated, we can switch to a worst-case strategy until the hard constraint is satisfied, and go back to trying to minimize the soft constraint. This alternation can be done infinitely often. In the Boolean setting, however, this alternation can violate the Boolean constraint. We note that 
unlike classical parity games, where the parity winning condition can be translated to a richer mean-payoff objective,
the parity winning condition in our parity-MDPs does not admit a similar translation.

Other works on MDPs and mean-payoff objectives tackle different aspects of quantitative analysis. In~\cite{Put14}, a solution to the expected mean-payoff value over MDPs is presented. In~\cite{CD11} and~\cite{CLGO14}, the authors study a combination of mean-payoff and parity objectives over MDPs and over stochastic two-player games. There, the goal of the system is to ensure with probability 1 that the parity condition holds and that the mean-payoff is below a threshold. This differs from our work in that the parity condition is not a hard constraint, as it is met only almost-surely, and in that the 
expected mean-payoff is not guaranteed to be
minimized. As detailed in the paper, these differences make the technical challenges very different.

Due to lack of space, some proofs appear in the appendix.
\section{Parity-MDPs}
\label{sec:prelim}

\stam{
A Markov chain $\M=\zug{S,P}$ consists of a finite state space $S$ and a 
  stochastic transition matrix $P:S \times S \rightarrow [0,1]$. That is, for all $s \in S$, we have $\sum_{s' \in S} P(s,s') =1$. Given an initial state $s_0$, consider the vector $v^0$ in which $v^0(s_0)=1$ and $v^0(s)=0$ for every $s\neq s_0$. The {\em limiting distribution} of $\M$ is $\lim_{n\to \infty}\frac{1}{n}\sum_{m=0}^n v^0P^m$. The limiting distribution satisfies $\pi P=\pi$, and can be computed in polynomial time~\cite{GL97}.
}

A {\em parity Markov decision process\/} (Parity-MDP, for short) combines a parity game with a mean-payoff MDP. The game is played between Player~1, who models a 
system, and Player~2, who models the environment. The environment is adversarial with respect to the parity winning condition and is stochastic with respect to the mean-payoff objective. Formally, a parity-MDP is a tuple
$\M=\zug{S_1, S_2,s_0,\Act_1,\Act_2,\delta_1, \delta_2,\MDPProb,\MDPcost,\alpha}$, with the following components. The sets $S_1$ and $S_2$ are finite set of states, for Players 1 and 2, respectively. Let $S=S_1 \cup S_2$. Then, $s_0\in S$ is an initial state, and $\Act_1$ and $\Act_2$ are sets of actions for the players. Not all actions are available in all states: for every state $s\in S_i$, for $i\in \set{1,2}$, we use $\Act_i(s)$ to denote the finite set of actions available to Player $i$ in the state $s$. 
For $i\in \set{1,2}$, the transition function $\delta_i:S_i\times \Act_i\nrightarrow S$ is such that $\delta_i(s,a)$ is defined iff $a\in \Act_i(s)$. Let $\delta=\delta_1 \cup \delta_2$. 
Note that $\delta_2$ gets an action of Player~2 as a parameter. We distinguish between two approaches to the way the action is chosen. In the adversarial approach, it is Player~2 who chooses the action. In the stochastic approach, the choice depends on the (partial) function $\MDPProb:S_2\times \Act_2\nrightarrow [0,1]$, where for every state $s\in S_2$ and $a \in A_2$, we have that $ \MDPProb(s,a)>0$ only if $a \in A_2(s)$. Also, $\sum_{a\in \Act_2(s)} \MDPProb(s,a)=1$. Finally, $\MDPcost:S\to \Nat$ is a cost function, and  $\alpha:S\to \set{0,...,d}$, for some $d \in \Nat$, is a parity winning condition.

The parity-MDP  $\M$ induces a {\em parity game\/} $\M^{\P}=\zug{S_1, S_2,s_0,\Act_1,\Act_2,\delta_1, \delta_2,\alpha}$, obtained by omitting $\MDPProb$ and $\MDPcost$. In this game, we follow the adversarial approach to the environment. Thus, both players choose their actions. Formally, 
a {\em strategy} for Player $i$  in $\M$, for $i \in \{1,2\}$ is a function $f_i:S^*\times S_i\to \Act_i$ such that for $s_0,\ldots,s_n$, we have $f(s_0,\ldots,s_n)\in \Act_i(s_n)$. Thus, a strategy suggests to Player $i$ an available action given the history of the states traversed so far.  
Note that we do not consider randomized strategies, but rather deterministic ones. 
Our results in Section~\ref{sec: solving} show that this is sufficient, in the sense that the players cannot gain by using randomization.

Given strategies $f_1$ and $f_2$ for Players $1$ and $2$, the {\em play} induced $f_1$ and $f_2$ is 
is the infinite sequence of states $s_0,s_1,...$ such that for every $j\ge 0$, if $s_j\in S_i$, for $i \in \{1,2\}$, then $s_{j+1}=\delta_i(s_{j},f(s_0,...,s_{j}))$. For an infinite play $r$, we denote by $\Inf(r)$ the set of states that $r$ visits infinitely often.
The play $r=s_0,s_1,...$ of $\M$ is {\em parity winning} if $\maxs{\alpha(s): s\in\Inf(r)}$ is even. 

The parity-MDP  $\M$ also induces an {\em MDP\/} $\M^{\MDP}=\zug{S_1, S_2,s_0,\Act_1,\Act_2,\delta_1, \delta_2,\MDPProb,\MDPcost}$, obtained by omitting $\alpha$. In this game, we follow the stochastic approach to the environment and consider the distribution of plays when only a strategy for Player~1 is given. Formally, we first extend $\MDPProb$ to transitions as follows: For states $s\in S_2$ and $s'\in S$, we define $\MDPProb(s,s')=\sum_{a\in A(s): \delta_2(s,a)=s'}\MDPProb(s,a)$.
Then, a play of $\M$ with strategy $f_1$ for Player~1 is an infinite sequence of states $s_0,s_1,...$ such that for every $j\ge 0$, if $s_j\in S_1$, then $s_{j+1}=\delta_1(s_{j},f_1(s_0,...,s_{j}))$, and if $s_j\in S_2$, then $\MDPProb(s_j,s_{j+1})>0$. 
The {\em cost} of a strategy $f_1$ is the expected average cost of a random walk in $\M$ in which Player~1 proceeds according to $f_1$. Formally, for $m\in \Nat$ and for a prefix $\tau =s_0,s_1,...s_m$ of a play, let $I_2=\set{j:j<m\text{ and } s_j\in S_2}$. Then, we define $\MDPProb_f(\tau)=\prod_{j\in I} \MDPProb(s_{j},s_{j+1})$ and $\cost_m(f,\tau)=\frac{1}{m+1}\sum_{j=0}^m \MDPcost(s_j)$. 
The cost of a strategy $f_1$ is then  $\cost(f_1)=\liminf_{m\to\infty}\sum_{\tau:|\tau|=m} \cost_m(f_1,\tau)\cdot \MDPProb_f(\tau)$.
We denote by $\Inf(f)$ the random variable that associates $\Inf(\rho)$ with a sequence of states $\rho=s_0,s_1,...$, under the probability space induced by $\M$ with $f$.

A {\em finite memory} strategy for $\M$ is described by a finite set $M$ called {\em memory}, an initial memory $init\in M$, a {\em memory update function} $next:S_1\times M\to M$, and an {\em action function} $act:S_1\times M\to \Act_1$ such that $act(s,m)\in \Act_1(s)$ for every $s\in S_1$ and $m\in M$. 

A strategy is {\em memoryless} if it has finite memory $M$ with $|M|=1$. Note that a memoryless strategy depends only on the current state. Thus, we can describe a memoryless strategy by  $f_1:S_1\to \Act_1$. 
Let $\cost(\M)=\inf \{\cost(f_1): f_1 \mbox{ is a strategy for }\M\}$. That is, $\cost(\M)$ is the expected cost of a game played on $\M$ in which Player~1 uses an optimal strategy.  

The following is a basic property of MDPs.
\vspace*{-5pt}
\begin{theorem}
	\label{thm:solving MDP in P}
Consider an MDP $\M$. Then, $cost(\M)$ can be attained by a memoryless strategy, which can be computed in polynomial time.
\end{theorem}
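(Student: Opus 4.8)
The plan is to invoke and adapt the classical theory of average-cost (mean-payoff) MDPs. First I would observe that a memoryless strategy $f_1 : S_1 \to \Act_1$, together with the probabilistic choices governed by $\MDPProb$ on the states of $S_2$, turns $\M^{\MDP}$ into a finite Markov chain on $S$. By standard Markov-chain theory, the expected long-run average cost $\cost(f_1)$ of this chain started from $s_0$ equals $\sum_{s \in S} \pi_{f_1}(s)\cdot \MDPcost(s)$, where $\pi_{f_1}$ is the Ces\`aro-limiting distribution reached from $s_0$. Thus the problem reduces to minimizing this quantity, and the two things to establish are that the infimum $\cost(\M)$ is attained by \emph{some} memoryless (indeed deterministic) strategy, and that such a strategy can be produced in polynomial time.

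For the computation I would set up the standard linear program over state--action occupation frequencies: a variable $x_{s,a}$ for each $s\in S$ and available action $a$ (with $S_2$-states contributing forced, $\MDPProb$-weighted flow), subject to flow-conservation (balance) constraints, non-negativity, and normalization $\sum_{s,a} x_{s,a}=1$; the objective minimizes $\sum_{s,a} x_{s,a}\cdot \MDPcost(s)$. Any feasible point yields a randomized memoryless strategy by normalizing the $x_{s,a}$ within each state, and an optimal feasible point yields the optimal long-run average cost inside the recurrent part of the chain it induces. Since this LP has polynomially many variables and constraints over rational data, it is solvable in polynomial time.

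The main obstacle is to reconcile this ``stationary, recurrent'' picture with the cost as actually defined, namely the $\liminf$ average cost from the fixed initial state $s_0$, and to guarantee a \emph{deterministic} optimizer. For the initial-state dependence I would precede the LP by a maximal-end-component decomposition of $\M^{\MDP}$: within each end component reachable from $s_0$ the LP computes its optimal recurrent average cost, and $\cost(\M)$ is then the minimum of these values over end components reachable from $s_0$. A memoryless deterministic strategy suffices because at each state it can simultaneously choose an action that drives the play toward a best reachable end component during the transient phase and the recurrence-optimal action once inside it; both choices depend only on the current state. For determinism of the in-component optimizer I would either take the LP optimum at a vertex of its polytope---vertices correspond to deterministic choices---or, equivalently, run policy iteration, which improves a deterministic policy monotonically and terminates.

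Polynomial-time computability of the whole procedure then follows, since the end-component decomposition, reachability, and the linear program are each computable in polynomial time, and assembling the pieces into a single function $f_1 : S_1 \to \Act_1$ is immediate. The subtle point worth stating carefully is that the $\liminf$ in the definition of $\cost$ is in fact a genuine limit for the constructed memoryless strategy, so that the value attained matches the LP optimum; this is exactly what the Ces\`aro convergence of finite Markov chains provides.
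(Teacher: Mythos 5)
The paper itself gives no proof of this theorem: it is invoked as a classical property of average-cost MDPs (the result presented in Puterman's textbook), so your proposal is judged against the standard theory. Most of your outline is indeed the textbook argument and is fine: a memoryless strategy induces a finite Markov chain whose Ces\`aro limit exists, vertices of the occupation-measure polytope (or policy iteration) give determinism, and the LPs involved are polynomial-size. But there is a genuine error, located exactly at the step you yourself flagged as ``the main obstacle.'' You claim that after a maximal-end-component decomposition, $\cost(\M)$ equals the \emph{minimum} of the optimal recurrent average costs over end components reachable from $s_0$, attained by a strategy that ``drives the play toward a best reachable end component during the transient phase.'' This is false: reachability of an end component does not mean it can be reached with probability $1$, since probabilistic transitions may scatter the play among several end components no matter what Player~1 does. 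Concretely, if $s_0$ is a probabilistic state moving with probability $\sfrac{1}{2}$ to a sink with cost $0$ and with probability $\sfrac{1}{2}$ to a sink with cost $10$, both sinks are reachable end components, the minimum of their values is $0$, yet every strategy has expected cost $5$.

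The correct second stage is itself an optimization, not a minimum: $\cost(\M)=\inf_f \sum_{C} \Pr_f(\mbox{the play is absorbed in } C)\cdot v(C)$, where $v(C)$ is the optimal average cost within the end component $C$. It is computed by solving a second MDP --- a stochastic-shortest-path/expected-total-reward problem in which entering $C$ yields terminal value $v(C)$ --- or, equivalently, by the full multichain LP, which carries a second block of ``transient flow'' variables tied to the initial state; your single-block LP with flow conservation and normalization ignores $s_0$ entirely and only ranges over mixtures of recurrent occupation measures. Both fixes stay in polynomial time and still yield a memoryless deterministic optimal strategy, so the theorem survives, but as written your transient-phase argument certifies the wrong value. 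Note that the paper's own algorithm for parity-MDPs (Section~\ref{sec:infinite memory}) is built around precisely this point: after assigning each good end component its value, it \emph{solves the resulting MDP} $\M'$ --- rather than taking a minimum over reachable components --- exactly because the absorption probabilities must themselves be optimized.
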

\vspace*{-5pt}

Recall that a strategy $f_1$ for player 1 is winning in $M^{\P}$ if every play of $\M$ with $f_1$ satisfies the parity condition $\alpha$. Note that we require {\em sure} winning, in the sense that all plays must be winning, rather than winning with probability 1 ({\em almost-sure} winning). On the other hand, the definition of cost in $M^{\MDP}$ considered strategies for Player~1 and ignore the parity winning condition. We now define the {\em sure cost\/} of the parity-MDP, which does take them into account. 
For a strategy $f_1$ for Player~1, the sure cost of $f_1$, denoted $\costs(f_1)$, is $\cost(f_1)$, if $f_1$ is winning, and is $\infty$ otherwise. The sure cost of $\M$ is then $\costs(\M)=\inf\set{\costs(f_1): f_1\text{ is a strategy for }\M}$.
\vspace*{-8pt}
\paragraph*{End Components}
Consider a parity-MDP $\M=\zug{S_1,S_2,s_0,\Act_1,\Act_2,\delta_1, \delta_2,\MDPProb,\MDPcost,\alpha}$. An {\em end component\/} (EC, for short) is a set $U\subseteq S$ such that for every state $s\in U$, the following hold.

\vspace*{-8pt}
\begin{enumerate}
\item If $s\in S_1$, then there exists an action $a\in \Act_s$ such that $\delta_1(s,a)\in U$.
\item If $s\in S_2$, then for every $a\in \Act_2(s)$ such that $\MDPProb(s,a)>0$, it holds that  $\delta_2(s,a)\in U$.
\item For every $t,t'\in U$, there exist a path $t=t_0,t_1,...,t_k=t'$ and actions $a_1,...,a_t$ such that for every $0\le i< t$, it holds that $t_i\in U$, and there exists an action $a$ such that $\delta(t_i,a)=t_{i+1}$.
\end{enumerate}
Intuitively, the probabilistic player cannot force to leave $U$, and Player~1 has positive probability of reaching every state in $U$ from every other state. 

For an EC $U$ and a state $s\in U$, we can consider the parity-MDP $\M|^s_U$, in which the states are $U$, the initial state is $s$, and all the components are naturally restricted to $U$. Since $U$ is an EC, then this is indeed a parity-MDP. An EC $U$ is {\em maximal\/} if for every nonempty $U' \subseteq S \setminus U$ 
, we have that $U \cup U'$ is not an EC.

\stam{
Consider a set $R\subseteq S$. A {\em environment attractor} for $R$, denoted $\attr_{\env}(R)$ is defined inductively as follows. First, $T_0=R$. Now, for every $i>0$, let $T_{i+1}=T_i\cup \{s\in S_1: \text{ for every } a\in A_1(s),\text{ we have that }\delta_1(s,a)\in T_i\} \cup \{s\in S_2: \text{ there exists }a\in A_2(s)\text{ s.t. }\delta_2(s,a)\in T_i \text{ and \MDPProb(s,a)>0}\}$. Then, $\attr_{\env}(R)=\bigcup_{i}T_i$. It is well known that $\attr_{\env}(R)$ can be computed in time polynomial in the description of $\M$. We analogously define the {\em system attractor} $\attr_{\sys}(R)$, by swapping the roles of Players $1$ and $2$.
}

\section{Solving Parity MDPs}
\label{sec: solving}
In this section we study the problem of finding the sure cost for an MDP. Recall that for MDPs, there always exists an optimal memoryless strategy. 
We start by demonstrating that for the sure cost of parity-MDPs, the situation is much more complicated.
\begin{theorem}
\label{no opt}
There is a parity-MDP $\M$ in which Player~1 does not have an optimal strategy (in particular, not a memoryless one) for attaining the sure cost of $\M$. Moreover, 
for every $\epsilon>0$,
Player~1 
may need
infinite memory in order to 
$\epsilon$-approximate $\costs(\M)$.
\end{theorem}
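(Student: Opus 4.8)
The plan is to produce one explicit parity-MDP $\M$ and then establish three separate facts about it: a closed form for $\costs(\M)$, the non-existence of an optimal strategy, and a strict gap between the finite-memory value and $\costs(\M)$. The design principle I would follow is to stress the one feature that distinguishes parity-MDPs from the purely quantitative models of~\cite{BFRR14a,CR15}: the Player~2 states are \emph{adversarial} when the hard parity condition is checked but \emph{stochastic} when the soft cost is measured. Concretely I would build a gadget with a single even-priority ``refresh'' state and a cheap odd-priority region, so that Player~1 is obliged to revisit the refresh state infinitely often (to win parity surely) but is charged only while away from it. The refresh can be obtained in two ways: a cheap way that piggybacks on a stochastic move of Player~2, and an expensive fallback that Player~1 can always execute on its own. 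Because parity must be secured against an \emph{adversary} that may withhold the stochastic move forever, every sure-winning strategy must keep the expensive fallback available; yet under the \emph{stochastic} reading the cheap move is usually present. This is exactly where the tension between the two readings of the same Player~2 states is concentrated.

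For the value I would reduce to the induced MDP: restricting to a maximal end component and applying Theorem~\ref{thm:solving MDP in P} to the sub-MDP obtained after fixing Player~1's ``shape'' of play yields a candidate $v=\costs(\M)$ together with a lower bound $\cost(f)\ge v$ for every sure-winning $f$. The matching upper bound I would get from an explicit family $f_k$ of infinite-memory strategies whose patience parameter $k$ tends to infinity and whose costs satisfy $\cost(f_k)\to v$. Non-existence of an optimal strategy then reduces to a strict inequality $\cost(f)>v$ for \emph{every} sure-winning $f$, which I would obtain by quantifying the residual cost that the sure-parity requirement forces into any single strategy: the $f_k$ show this residual can be made arbitrarily small, so $v$ is an infimum approached along the $f_k$ and attained by none.

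The finite-memory gap is where the real work lies. A finite-memory sure-winning strategy induces, under the stochastic Player~2, a finite Markov chain, and sure-winning forces every bottom strongly connected component of this chain to contain an expensive-fallback transition---otherwise the adversary, restricted to the positive-probability edges of that component, could keep the play within its odd part forever and violate parity. I would then argue that the stationary frequency of that transition is bounded below by a constant $g>0$ that does \emph{not} shrink as the memory grows, yielding $\cost(f)\ge v+g$ for all finite-memory winning $f$, while the $f_k$ dip below $v+g$; hence for every $\epsilon<g$ no finite-memory strategy $\epsilon$-approximates $\costs(\M)$, and infinite memory is genuinely needed.

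The main obstacle is precisely the \emph{uniformity} of this lower bound over all memory sizes, together with the strict non-attainment. Naive ``deadline'' or ``counting'' constructions collapse here, because a larger finite counter lets Player~1 postpone the expensive fallback longer, driving its frequency---and therefore the cost---down to $v$; this would erase both the gap and the non-attainment at once. The construction must therefore be tuned so that the \emph{asymptotic} frequency of forced fallbacks is insensitive to how large a finite counter is. I would arrange this by driving the cheap/expensive choice through a recurrent stochastic component whose relevant statistic any fixed finite memory can eventually track only in a saturated way, so that larger memory helps only on a transient prefix that is invisible to the mean-payoff. Turning this saturation phenomenon into the clean, memory-independent bound $g$---simultaneously with the strictly positive residual needed for non-attainment---is the delicate point on which the whole proof turns; everything else is renewal-reward bookkeeping on the induced chains together with the reduction of the cost computation to Theorem~\ref{thm:solving MDP in P}.
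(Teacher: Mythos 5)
You have the right skeleton, and it matches the paper's: a cheap odd-priority state, a stochastic Player~2 route that may grant the even priority ``for free,'' a deterministic expensive route controlled by Player~1, the BSCC argument showing that every finite-memory sure-winning strategy must keep the expensive route inside its recurrent behavior, and a growing-patience infinite-memory family approaching the value. But there is a genuine gap, and it sits exactly at the point you yourself flag as ``the delicate point on which the whole proof turns.'' Your gadget makes the expensive fallback a \emph{refresh}, i.e., recurrent: Player~1 pays, visits the even priority, and returns to the cheap region. For such a gadget the theorem is simply false. A finite-memory strategy that loops ``stay cheap for $L$ steps, attempt the stochastic route up to $N$ times, on total failure take the fallback once, repeat'' is sure-winning (every bounded-length round contains an even-priority visit, against every adversary) and its cost tends to $v$ as $L\to\infty$; worse, an infinite-memory variant with growing phase lengths $L_i$ \emph{attains} $v$ exactly, so non-attainment fails as well. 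Relatedly, your intermediate claim that the stationary frequency of the fallback transition inside a BSCC ``is bounded below by a constant $g>0$ that does not shrink as the memory grows'' does not follow from the BSCC argument and is false for recurrent fallbacks: that frequency is roughly $2^{-N}$. The ``saturation'' mechanism you sketch to repair this is left unspecified, and it is not clear it can be made to work at all.

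The paper's fix is a one-line design decision that makes all of this moot: the expensive fallback leads to an \emph{absorbing} state of even (maximal) priority with cost $10$, while the cheap odd state has cost $1$. Irrevocability is what creates the memory-uniform gap. By your own BSCC argument, any BSCC of a finite-memory sure-winning strategy that avoids the sink admits a consistent adversarial play (always take the positive-probability edge that returns without the even priority) that stays in odd priorities forever; hence every BSCC is the sink, the play is absorbed w.p.~1 at an almost-surely finite time, and the cost is exactly $10$ regardless of memory size --- no frequency or renewal-reward analysis is needed, and postponement cannot help because the mean-payoff after almost-sure absorption is $10$ no matter how long the transient prefix is. Non-attainment also falls out: any sure-winning strategy, with any memory, must along the all-failures history eventually play the fallback; that finite prefix has positive probability, and conditioned on it the cost is $10$ forever, so every winning strategy has cost strictly above $v=1$, while the growing-patience family shows $\costs(\M)=1$. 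In short: you identified the crux correctly but chose a recurrent fallback, which destroys both claims of the theorem; the missing idea is to make giving up \emph{irrevocable}.
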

\begin{proof}
Consider the parity-MDP $\M$ appearing in Figure~\ref{fig:infinite mem}. Player~1 can decrease the cost of a play towards $1$ by staying in the initial state. However, in order to ensure an even parity rank, Player~1 must either play $b$ and reach a states with parity rank $2$ and cost $10$ w.p. $0.5$, or play $c$ but incur cost $10$. A finite memory strategy for Player~1 must eventually play $c$ from the initial state in every play,\footnote{Note that this also implies that randomized strategies could not be of help here.} thus the cost of every winning finite-memory strategy is $10$. On the other hand, for every $\epsilon>0$, there exists an infinite memory strategy $f$ that gets cost at most $1+\epsilon$.
Essentially (see Lemma~\ref{lem: GEC value} for a formal proof of the general case), the strategy $f$ plays $b$ for a long time. If the state with parity rank $2$ is reached, it plays $b$ for even longer, and otherwise plays $c$.
\begin{figure}[ht]
\input{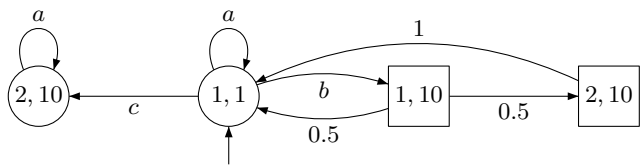}
\caption{The Parity MDP $\M$. States of Player~1 are circles, these of Player~2 are squares, with outgoing edges marked by their probability. Each state is labeled by its parity rank (left) and cost (right). Player~1 has no optimal strategy and needs infinite memory for an $\epsilon$ approximation. }
\label{fig:infinite mem}
\end{figure}

Finally, there is no optimal strategy for Player~1, as every strategy that plays $c$ from the initial state eventually (i.e., as a response to some strategy of Player~2) gets cost $10$ with some positive probability. However, a strategy that never plays $c$ is not parity-winning.
\end{proof}

Following Theorem~\ref{no opt}, our solution to parity MDPs suggests two algorithms. The first, described in Section~\ref{sec:infinite memory}, finds the limit value of all possible strategies, which corresponds to infinite-state transducers. The second, described in Section~\ref{sec:finite memory}, computes the limit value over all finite-memory strategies. 
The complexity of both algorithms is NP$\cap$coNP. Moreover, they are computable in polynomial time when an oracle to a two-player parity game is given.
Hence, our complexity upper bounds match the trivial lower bounds that arise from the fact that every solution to a parity-MDP is also a solution to a parity game.


\vspace*{-8pt}
\subsection{Infinite-Memory Strategies}
\label{sec:infinite memory}
In this section we study the problem of finding the sure cost of a parity-MDP when infinite-memory strategies are allowed. 
We prove the upper bound in the following theorem. As stated above, the lower bound is trivial.

\begin{theorem}
\label{thm:parityMDP infinite NP cap co-NP}
Consider a parity-MDP $\M$. Then, $\costs(\M)$ can be computed in NP$\cap$co-NP, and is parity-games hard.
\end{theorem}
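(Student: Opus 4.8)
The plan is to separate $\costs(\M)$ into a hard parity-game part, which will account for the entire super-polynomial complexity, and a soft mean-payoff part, handled in polynomial time over the underlying MDP. First I would solve the induced two-player parity game $\M^{\P}$ and compute Player~1's sure-winning region $W_1\subseteq S$, together with a memoryless winning strategy $w_1$ on $W_1$ (recall that parity games are memoryless-determined). If $s_0\notin W_1$, then no strategy is parity-winning and $\costs(\M)=\infty$. Otherwise I would argue that it is never profitable for Player~1 to leave $W_1$, since doing so forfeits sure winning and yields sure cost $\infty$; as $W_1$ is Player-1-closed and $w_1$ is a universal parity backup from every state of $W_1$, the whole analysis may be restricted to the sub-MDP induced by $W_1$.

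Next, within $W_1$ I would invoke the end-component machinery. Call an EC $U\subseteq W_1$ a \gec if, confined to $U$, Player~1 has a strategy satisfying $\alpha$ almost surely against the stochastic Player~2. The key phenomenon, already visible in Figure~\ref{fig:infinite mem} and to be formalized in Lemma~\ref{lem: GEC value}, is that inside a \gec cost and parity can be \emph{decoupled}: Player~1 plays the cost-optimal memoryless MDP strategy of $\M|_U$ (which exists by Theorem~\ref{thm:solving MDP in P}) for longer and longer stretches, and only after longer and longer ``failure streaks'' -- runs in which the even-parity states that should be visited infinitely often fail to appear -- does she invoke the backup $w_1$. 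Because $U\subseteq W_1$, the backup forces parity on \emph{every} play, so the strategy is surely winning; because the streak thresholds grow, the backup fires with vanishing probability and the expected average cost tends to the optimal mean-payoff of $\M|_U$. Thus the value contributed by $U$ is exactly the almost-sure-parity-constrained optimal expected mean-payoff of $\M|_U$.

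Finally I would assemble $\costs(\M)$. Since the cost is a long-run average, transient behavior is immaterial, so the value of any winning strategy is the expectation of the optimal mean-payoff of the \gec in which its stochastic play settles. Hence $\costs(\M)$ is the minimal such expectation over strategies that steer the play, inside $W_1$, into \gecs almost surely -- a reachability-to-value computation over the \gec-decomposition of $W_1$, solvable in polynomial time. For the lower bound I would use that a random walk almost surely settles in an end component, and that sure winning forces that component to lie inside a \gec (and inside $W_1$), so no winning strategy beats this value; the upper bound is the infinite-memory construction of the previous paragraph, preceded by a steering phase.

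For the complexity claim, the only super-polynomial ingredient is the computation of $W_1$, which is in NP$\cap$co-NP by memoryless determinacy: guessing memoryless winning strategies for the two players on $W_1$ and on $S\setminus W_1$ certifies $W_1$ and is checkable in polynomial time, yielding both an NP and, symmetrically, a co-NP certificate. All remaining steps -- maximal-end-component decomposition of $W_1$, testing which ECs are \gecs (almost-sure parity in an MDP), and the mean-payoff and reachability evaluations -- are polynomial, so composing them with the parity-game certificate keeps the whole decision procedure in NP$\cap$co-NP, and in P relative to a parity-game oracle; this matches the trivial parity-game lower bound. The main obstacle is the middle step: proving rigorously that the growing-threshold, backup-guarded strategy attains \emph{sure} rather than merely almost-sure parity while driving the expected cost to the MDP optimum, together with the matching lower bound. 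This is the conceptual core to be carried by Lemma~\ref{lem: GEC value}.
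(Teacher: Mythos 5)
Your overall route is the paper's: preprocess by solving the parity game $\M^{\P}$ and restrict to Player~1's sure-winning region, decompose into end components, assign each good end component a mean-payoff value, and finish with a polynomial reachability-to-value computation on the resulting MDP, so that the only NP$\cap$co-NP ingredient is the parity game itself. The gap lies precisely in the step you defer to Lemma~\ref{lem: GEC value}, and your sketch of that step would fail. Inside a \gec $U$ you play the cost-optimal memoryless strategy $g$ of $\M^{\MDP}|_U$ and only \emph{monitor} whether the relevant even-rank states happen to appear, firing the (permanent) backup $w_1$ after a long failure streak, and you claim that growing thresholds make the backup fire with vanishing probability. This is false in general: $g$ is the \emph{unconstrained} optimum, so the even-rank states can have probability $0$ of ever appearing under $g$; then the failure streak occurs with probability $1$ regardless of the threshold, the backup fires almost surely, and by prefix-independence of the mean payoff your strategy's cost equals the backup's cost, not $\cost(\M^{\MDP}|_U)$. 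The paper's own example in Figure~\ref{fig:infinite mem} exhibits this: in the \gec consisting of the initial state and the two Player~2 states, the cost-optimal strategy self-loops forever at the initial state (rank $1$, cost $1$) and never visits the rank-$2$ state, so your strategy degenerates into the sole sure-winning backup (moving to the absorbing rank-$2$, cost-$10$ state) and yields cost $10$, while the sure cost there is $1$. Interpreting your backup as temporary instead does not help: a play that alternates between $g$-mode and $w_1$-mode infinitely often need not satisfy parity, since $w_1$ guarantees parity only when played forever, so sure winning is lost.

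The missing ingredient is the paper's \emph{active reachability} phase. After each $g$-stretch, one plays, for a bounded number of steps, a memoryless strategy $f^C$ maximizing the probability of reaching $C^{\max}$, the states of maximal (even) rank in the \gec. Because $U$ is an end component, each such bounded attempt fails with probability at most some $\lambda<1$; with attempt lengths growing across iterations, the probability that some attempt ever fails can be made smaller than $\epsilon$ (and on failure the permanent backup restores sure parity), whereas if every attempt succeeds then $C^{\max}$ is visited infinitely often, which yields sure parity since its rank is maximal in $U$ and even, and the ever-longer $g$-stretches dominate the average cost. In short, visits to the even states must be actively enforced by bounded-horizon reachability phases, not awaited under $g$. (A second, harmless, divergence: you define a \gec semantically, via almost-sure parity inside the EC, while the paper's definition is syntactic --- maximal rank even; once the lemma is repaired, either notion supports the same final computation.)
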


Consider a parity-MDP $\M=\zug{S_1,S_2,s_0,\Act_1,\Act_2,\delta_1, \delta_2,\MDPProb,\MDPcost,\alpha}$.
We first remove from $\M$ all states that are not sure-winning for Player~1 in $\M^{\P}$. Clearly, every strategy that attains $\costs(\M)$ cannot visit a state that is losing in  $\M^{\P}$. Thus, we henceforth assume that all states in $\M$ are winning for Player~1 in $\M^{\P}$. 
We say that an EC $C$ of $\M$ is {\em good} (\gec, for short) if its maximal rank is even. That is, $\max_{s\in C}\set{\alpha(s)}$ is even. 

The idea behind our algorithm is as follows. W.p. 1, each play in $\M$ eventually reaches and visits infinitely often all states of some EC. Hence, when restricting attention to plays that are winning for Player~1 in $\M^{\P}$, it must be the case that this EC is good. It follows that the sure cost of $\M$ is affected only by the properties of its \gecs.
Moreover, since  the minimal expected mean-payoff value is the same in all the states of an EC, we can consider only maximal \gecs and refer to the value of an EC, namely the minimal expected value that Player~1 can ensure while staying in the EC. Our algorithm constructs a new MDP (without ranks) $\M'$ in which the cost of a state is the value of the maximal \gec it belongs to. If a state does not belong to a \gec, then we assign it a very high cost in $\M'$, where the intuition is that Player~1 cannot benefit from visiting this state infinitely often. We claim that the sure cost in the parity-MDP $\M$  coincides with the cost of the MDP $\M'$.

Formally, for an EC $C$, let $C^{\max}$ be the set of the states of $C$ with the maximal parity rank in $C$. By definition, this rank is even when $C$ is a \gec. 
Note that if $C$ and $C'$ are \gecs and $C\cap C'\neq \emptyset$, then $C\cup C'$ is also a \gec. Thus, we can restrict attention to maximal \gec.
For a \gec $C$, there exists a memoryless strategy $f^C$ that maximizes the probability of reaching $C^{\max}$ from every state $s\in C$ while staying in $C$. Moreover, since $C$ is an EC, the probability of reaching $C^{\max}$ by playing $f^C$ is strictly positive from every state $s\in C$. Let $t$ be a state in $C$. Consider the MDP $\M^{\MDP}|^t_C$. 
Since $C$ is EC, we have that $\cost(\M^{\MDP}|^t_C)$ is independent of the initial state $t$. Thus, we can define $\cost(\M^{\MDP}|_C)$ as $\cost(\M^{\MDP}|_C^t)$ for some $t\in C$.

Recall that our algorithm starts by a preprocessing step that removes all states that are not sure-winning for Player~1 in $\M^{\P}$.
It then finds the maximal \gecs of $\M$ (using a polynomial-time procedure that we describe in Appendix~\ref{app inf memory finding gecs}), and obtain an MDP $\M'$ by assigning every state within a \gec $C$ the cost $\cost(\M^{\MDP}|_C)$, and assigning every state that is not inside a \gec cost $W+1$, where $W$ is the maximal cost that appears in $\M$. We claim that $\costs(\M)=\cost(\M')$.
 
Before proving the claim, note that all the steps of the algorithm except for the preprocessing step that involves a solution of parity game require polynomial time. In particular, the strategies $f^C$ above are computable in polynomial time by solving a reachability MDP, and, by Theorem~\ref{thm:solving MDP in P}, so does the final step of finding $\cost(M')$. 

\stam{
Next, intuitively, we assign every edge that is not inside a \gec a high cost. Since our cost function is on states, we add a state on each such edge, as follows. For every transition $s'=\delta_i(s,a)$ for $i\in \set{1,2}$ such that $s$ and $s'$ are not in the same \gec, or not in a \gec at all, we add a state $q_{s,a,s'}$ that replaces the transition from $s$ to $s'$ by two transitions going through $q_{s,a,s'}$. We set $\MDPcost(q_{s,a,s'})=2W+1$, where $W$ is the maximal cost that appears in $\M$. We take $2W+1$ since a transition through $q_{s,a,s'}$ consists of two transitions, instead of the original transition in $\M$.
Thus, we obtain an MDP $\M'$, and we return $\cost(\M')$.
}

\stam{
In order to find the maximal \gec of $\M$, we proceed as follows. 
\begin{enumerate}
\item Compute the maximal EC decomposition of $\M$.
\item For every maximal EC $C$, if $C$ is not good (i.e., the maximal parity rank in $C$ is odd), remove it from the graph $\attr_{\env}(C^{\max})$ and go to (1).
\item Once all the remaining components are good, return them.
\end{enumerate}
We remark that upon returning to step (1) from (2), it may be that the graph of $\M$ is not connected. Still, we find the decomposition in all the components.

We now turn to analyze the runtime of the algorithm.
First, computing the \gec of $\M$ can be done in polynomial time. Indeed, finding the maximal EC decomposition of $\M$ takes polynomial time (see \cite{CH14}
), and going over each component and checking that the maximal priority is even is clearly polynomial.
Next, finding the attractor takes polynomial, and removing the attractor can be done at most a polynomial number of times. 

The rest of the algorithm consists of adding a polynomial number of states with polynomial cost, and finding the value of an MDP, which by Theorem~\ref{thm:solving MDP in P} can be done in polynomial time.

Finally, we prove the correctness of the algorithm. 
We start with the following claim, which explains the correctness of our algorithm within \gec.
}

Proving that $\costs(\M)=\cost(\M')$ involves the following steps (see Appendix~\ref{app inf memory} for the full proof).
First, proving $\costs(\M)\ge \cost(\M')$ is not hard, as a play with a winning strategy $f$ for Player~1 in $\M$ reaches and stays in some \gec $C$ w.p. 1, and within $C$, the best expected cost one can hope for is $\cost(\M^{\MDP}|_C)$, which is exactly what the strategy $f$ attains when played in $\M'$.

Next, proving $\costs(\M)\le \cost(\M')$, we show how an optimal strategy $f'$ in $\M'$ induces an $\epsilon$-optimal strategy $f$ in $\M$. We start with Lemma~\ref{lem: GEC value}, 
which 
justifies the 
costs within a \gec.
\begin{lemma}
\label{lem: GEC value}
Consider a \gec $C$ in $\M$, and $s\in C$. Let $v(s)=\cost(\M^{\MDP}|^s_C)$, then for every $\epsilon>0$ there exists a strategy $f$ of
 $\M^s$
 with $\costs(f)\le v(s)+\epsilon$. 
\end{lemma}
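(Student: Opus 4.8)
The plan is to build $f$ by interleaving a memoryless strategy $g$ that is optimal for the \emph{mean-payoff} objective in the restricted MDP $\M^{\MDP}|_C$ with a global \emph{sure-winning} strategy $w$ for Player~1 in $\M^{\P}$; such a $w$ exists and may be taken memoryless, since all states of $\M$ are sure-winning after the preprocessing step and parity games are memoryless-determined. By Theorem~\ref{thm:solving MDP in P}, $g$ attains expected average cost $v(s)=\cost(\M^{\MDP}|^s_C)$, and as $C$ is an EC this value is the same from every state of $C$; write $v$ for it. I also use a memoryless strategy $f^C$ maximizing the probability of reaching $C^{\max}$ within $C$: since $C$ is an EC, this probability is positive from every state, so there are $q>0$ and $m\in\Nat$ with $f^C$ reaching $C^{\max}$ within $m$ steps with probability at least $q$ from any state of $C$. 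The strategy $f$ then proceeds in phases $k=1,2,\dots$ with parameters $N_k,T_k\to\infty$ fixed at the end: in phase $k$ Player~1 first plays $g$ for $N_k$ steps and then plays $f^C$ for up to $T_k$ steps, stopping the excursion as soon as $C^{\max}$ is visited. Crucially, $f$ carries one \emph{permanent fallback}: the instant the play either leaves $C$ or completes an $f^C$-excursion without reaching $C^{\max}$ within its budget $T_k$, Player~1 abandons the phase structure and plays $w$ forever.

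I would first argue that $f$ is sure-winning, which forces $\costs(f)=\cost(f)$. Consider any play $r$ consistent with $f$ and any adversarial behaviour of Player~2. If a fallback is ever triggered, then from that point $r$ follows the sure-winning $w$ from a state that is sure-winning, so $r$ satisfies $\alpha$. Otherwise no fallback fires, so $r$ stays in $C$ forever and every excursion reaches $C^{\max}$; hence $C^{\max}$ is visited infinitely often, and since all states of $r$ lie in $C$ we have $\maxs{\alpha(t):t\in\Inf(r)}\le\max_{t\in C}\alpha(t)$, which therefore equals the (even) rank of $C^{\max}$, so $r$ is winning. In both cases $r$ satisfies the parity condition, so $f$ is sure-winning.

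It then remains to bound $\cost(f)$ under the stochastic semantics, where the fallback fires only on a small-probability event. Leaving $C$ has probability $0$ because $C$ is an EC and $g,f^C$ act within $C$, while the probability that the $k$-th excursion fails is at most $(1-q)^{\lfloor T_k/m\rfloor}$; choosing $T_k$ so that these sum below a prescribed $\epsilon'$ makes the total probability of ever reverting to $w$ at most $\epsilon'$. On the complementary event the play stays in $C$, the $g$-segments contribute expected average cost approaching $v$ (optimality of $g$ together with $N_k\to\infty$), and each excursion adds at most $T_k\cdot W$ to the cost, where $W$ is the maximal cost in $\M$. Taking $N_k$ to grow fast enough that $\sum_{j\le k}T_j/\sum_{j\le k}N_j\to0$ makes the excursions a vanishing fraction of the horizon, so along the sequence of phase-ends the expected average cost tends to at most $(1-\epsilon')v+\epsilon' W$. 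Since $\cost(f)$ is a $\liminf$, evaluating it along these times gives $\cost(f)\le v+\epsilon$ once $\epsilon'$ and the growth rates are tuned to $\epsilon$.

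The main obstacle is exactly the tension the fallback resolves: the parity condition must hold \emph{surely}, i.e.\ against all of Player~2's moves, including the zero-probability ones that an EC leaves unconstrained, whereas the cost is measured only against the positive-probability stochastic moves. Reverting \emph{permanently} to the global sure-winning strategy $w$ on any deviation from the nominal plan is what decouples the two semantics: it guarantees a win on every adversarial play, while under the stochastic semantics it occurs with arbitrarily small probability and hence contributes negligibly to the expected cost. Note in particular that this avoids having to \emph{force} a visit to $C^{\max}$ against the adversary within $C$, which need not be possible. The residual work -- choosing $N_k$, $T_k$, and the per-phase failure probabilities so that the reversion cost, the excursion overhead, and the finite-horizon deviation of $g$ from its asymptotic value $v$ are all absorbed into $\epsilon$ -- is routine bookkeeping.
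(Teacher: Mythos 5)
Your proof is correct and takes essentially the same route as the paper's: interleave the optimal mean-payoff strategy $g$ with reachability excursions (via $f^C$) to $C^{\max}$, permanently fall back to a sure parity-winning strategy when an excursion fails, and tune the phase lengths so that the fallback probability and the excursion overhead are absorbed into $\epsilon$. Your explicit treatment of the adversary leaving $C$ via zero-probability moves is a point the paper leaves implicit, but it is a refinement of the same argument, not a different one.
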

Intuitively, in a good EC, $f$ minimizes the expected mean-payoff and \emph{once in a while} it plays reachability, aiming to visit to a state with the maximal rank in the EC. Since the EC is good, this rank is even.
If reachability is not obtained after $N$ rounds, for a parameter $N$, then $f$ gives up and aims at only surely satisfying the parity objective (our preprocessing step ensures that this is possible).
Otherwise, after reaching the maximal rank, $f$ switches again to minimizing mean-payoff.
This process is repeated forever, increasing $N$ in each iteration.
Hence, the probability that Player~1 eventually gives up can be bounded from above by an arbitrarily small $\epsilon > 0$. Accordingly, Player~1 can achieve a value that is arbitrarily close to $\cost(\M^\MDP|_C)$.

Finally, we construct the $\epsilon$-optimal strategy $f$ in $\M$ as follows. 
The strategy $f$ first mimics $f'$ for a large number of steps $k$, or until an EC (in which $f'$ stays forever) is reached.
If a good EC is not reached, then $f$ aims at only surely satisfying the parity objective. 
If a good EC is reached, then $f$ behaves as prescribed above, per Lemma~\ref{lem: GEC value}. Since the probability of $f'$ reaching a good EC within $k$ steps tends to $1$, then Player~1 can achieve a value within $\epsilon$ of $\cost(\M')$.

\stam{
We start with the ``easy'' direction, proving that $\costs(\M)\ge \cost(M')$. Consider a zwinning strategy $f$ for $\M$. With probability $1$, the play of $f$ in $\M$ reaches and stays in some \gec $C$. From every state in $C$, the minimal expected cost (when staying in $C$) is $v(C)$. Indeed, $v(C)$ is the cost of an MDP without the parity condition, which can only lower the minimal expected cost.
 Thus, we have that $\cost_\M(f)\ge \sum_{C\text{ is a \gec}}\Pr(f \text{ reaches and stays in }C)\cdot v(C)$.

Consider the strategy $f$ as a strategy for $\M'$. Then, $\cost_{\M'}(f)=\sum_{C\text{ is a \gec}}\Pr(f \text{ reaches and stays in }C)\cdot v(C)$, and we conclude that $\costs(\M)\ge \cost(\M')$.

For the other direction, we show that $\costs(\M)\le \cost(\M')$. Since $\M'$ is an MDP, then there exists an optimal memoryless strategy $f'$ such that $\cost_{\M'}(f')=\cost(\M')$. We show that for every $\epsilon>0$, there exists a winning strategy $f$ for $\M$ such that $\cost_{\M}(f)\le \cost_{\M'}(f')+\epsilon$. 

Observe that since $f'$ is memoryless and optimal, there exists a set of ECs ${\C}$  such that for every $C\in \C$, once $f'$ reaches a state $s\in C$, it stays in $C$ forever. Moreover, observe that every $C\in \C$ must be a \gec. Indeed, the states outside a \gec in $\M'$ have value $2W+1$, but from every state in $\M$ there exists a strategy that is parity-winning, and therefore ensures that a \gec is reached. Thus, if $f'$ gets stuck in an EC that is not good, we can modify it to reach a \gec, thus decreasing its cost. 

Let $\epsilon>0$. There exists some $N_0\in \Nat$ such that after $N_0$ steps, w.p. at least $1-\epsilon'$ a play in $\M'_{f'}$ reaches a \gec in $\C$ (for $\epsilon'>0$ which we will fix later). We obtain $f$ from $f'$ as follows. $f$ simulates $f'$ for $N_0$ steps. During this simulation, whenever $f$ reaches a \gec $C\in \C$, $f$ starts playing the strategy described in the proof of Lemma~\ref{lem: GEC value} for $\epsilon'$. After $N_0$ steps, $f$ plays a parity-winning strategy. 

Clearly $f$ is parity-winning. In addition, by our choice of $N_0$ and by Lemma~\ref{lem: GEC value}, it follows that w.p. at least $1-\epsilon'$, the cost of $f$ is at most $\cost_{\M'}(f)+\epsilon'$. Thus, $\cost_{\M}(f)\le (1-\epsilon')(\cost_{\M'}(f)+\epsilon')+\epsilon'|W|$, and for a small enough $\epsilon'$, this is at most $\cost_{\M'}(f)+\epsilon$.
}
\vspace*{-8pt}
\subsection{Finite-Memory Strategies}
\label{sec:finite memory}
In this section we study the problem of finding the sure cost of a parity-MDP, when restricted to finite memory strategies. For a parity-MDP $\M$, we define $\costsf(\M)=\inf\{\costs(f): f $ is a finite memory strategy for $\M\}$. We prove the upper bound in the following theorem. As stated above, the lower bound is trivial.

\begin{theorem}
\label{thm:parityMDP finite NP cap co-NP}
Consider a parity-MDP $\M$. Then, $\costsf(\M)$ can be computed in NP$\cap$co-NP, and is parity-games hard.
\end{theorem}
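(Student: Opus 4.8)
The plan is to follow the template of Theorem~\ref{thm:parityMDP infinite NP cap co-NP}, but to replace the good end-components (\gecs) by the game-theoretic notion appropriate for bounded memory. Figure~\ref{fig:infinite mem} shows why the two differ: with infinite memory Player~1 may wait longer and longer for the stochastic player to supply an even rank, whereas any finite memory fixes a bound on this wait, so under the probabilistic semantics the bound is almost surely exhausted and Player~1 is forced into its costly fallback. I therefore call an EC $C$ \emph{strongly good} (an \sgec) if Player~1 sure-wins the parity condition in the sub-game on $C$ in which Player~2 is restricted to its positive-probability actions (these stay in $C$ by the EC definition). By memoryless determinacy of parity games this is witnessed by a memoryless strategy $g^{\win}$, and the maximal \sgecs are found by the familiar refinement loop---take the maximal EC decomposition, solve the parity sub-game inside each EC, delete the attractor of Player~1's losing region, and recompute---which terminates after polynomially many calls to a parity-game oracle.

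The heart of the argument is the finite-memory counterpart of Lemma~\ref{lem: GEC value}: inside an \sgec $C$, for every $\epsilon>0$ Player~1 has a \emph{finite-memory} strategy whose sure cost is at most $\cost(\M^{\MDP}|_C)+\epsilon$. I would build it from the memoryless MDP-optimal strategy $g^*$ of Theorem~\ref{thm:solving MDP in P} and from $g^{\win}$, alternating a \emph{dwell} phase that plays $g^*$ for $K$ steps with a \emph{boost} phase that plays $g^{\win}$ until a maximal-rank state of $C$ is seen. Because $C$ is an \sgec, every play that uses only positive-probability edges reaches that rank during each boost, so the maximal (even) rank is hit infinitely often and parity holds surely along all such plays; under the probabilistic semantics each boost has finite expected length (almost-sure reachability in a finite Markov chain), so amortizing it over the dwell length sends the mean-payoff to $\cost(\M^{\MDP}|_C)$ as $K\to\infty$. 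Note that this per-component value is the same MDP value as in the infinite-memory case, so the entire difference between finite and infinite memory is localized to which end-components qualify. The only genuinely new point is that sure-winning is against the \emph{fully} adversarial Player~2, who may also take zero-probability edges: Player~1 therefore remembers the previous state and, whenever the current state is not a positive-probability successor of it, abandons the scheme for a global parity-winning strategy (which exists since all states survived the preprocessing). Such edges have probability $0$, so they never fire under the stochastic semantics and leave the expected cost untouched, while the detection guarantees sure-winning.

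Given the lemma, I would define the MDP $\M''$ as in the infinite-memory case but over \sgecs: every state of an \sgec $C$ gets cost $\cost(\M^{\MDP}|_C)$ and every other state gets cost $W+1$, and claim $\costsf(\M)=\cost(\M'')$. For ``$\ge$'', take any finite-memory sure-winning strategy $f$; under the stochastic semantics each recurrent class of the induced Markov chain is closed under positive-probability edges and, since $f$ sure-wins, satisfies parity along all such plays, so its projection $C$ is an \sgec. The stationary cost of the class is at least $\cost(\M^{\MDP}|_C)$, which is exactly the cost $f$ incurs in $\M''$, whence $\costs(f)\ge\cost_{\M''}(f)\ge\cost(\M'')$ and $\costsf(\M)\ge\cost(\M'')$. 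For ``$\le$'', an optimal memoryless strategy of $\M''$ (which stays in \sgecs, as the $W+1$ states are avoidable because all states are winning) is mimicked for $k$ steps to reach an \sgec and then continued by the strategy of the lemma; if no \sgec is reached within $k$ steps---an event of probability tending to $0$---Player~1 falls back to a global parity-winning strategy, so letting $k\to\infty$ and $\epsilon\to0$ yields $\costsf(\M)\le\cost(\M'')$.

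All three ingredients---preprocessing, the \sgec refinement loop, and the two applications of Theorem~\ref{thm:solving MDP in P} (for the values $\cost(\M^{\MDP}|_C)$ and for $\cost(\M'')$)---run in polynomial time once a parity-game oracle is available, placing the problem in NP$\cap$co-NP, and the parity-games hardness is the trivial special case of uniform costs. I expect the main obstacle to be exactly the finite-memory \sgec lemma: reconciling the two semantics of Player~2, namely that the boost phase must terminate surely against the fully adversarial player (via the previous-state detection) yet have finite \emph{expected} length against the stochastic player, while simultaneously keeping the amortized mean-payoff at the MDP value.
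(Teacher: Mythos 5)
Your high-level skeleton (preprocess by removing states that are not sure-winning, single out the ``right'' class of ECs, build an MDP in which each such EC gets its mean-payoff value and all other states get cost $W+1$, and prove the two values coincide) is exactly the paper's. However, the notion of \sgec you plug into it is wrong, and this breaks your central lemma. You call an EC $C$ an \sgec if Player~1 sure-wins the parity condition in the sub-game on $C$ restricted to Player~2's positive-probability actions. This is strictly weaker than what finite-memory strategies can exploit. Concretely, let $C=\set{q,p,r}$, where $q$ is a Player~1 state with rank $0$ and cost $10$, having a self-loop and an action leading to the Player~2 state $p$, which has rank $1$ and cost $10$; from $p$ the play moves w.p. $1/2$ back to $q$ and w.p. $1/2$ to the Player~2 state $r$, which has rank $2$ and cost $0$ and returns to $q$ w.p. $1$ (this is the right-hand EC of Figure~\ref{fig:infinite mem} with the initial state's rank lowered to $0$ and the costs changed). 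By your definition, $C$ is an \sgec: looping at $q$ sure-wins the restricted sub-game. Your algorithm therefore assigns all of $C$ the value $\cost(\M^{\MDP}|_C)=8$ (always moving towards $p$ yields stationary distribution $(0.4,0.4,0.2)$, hence mean cost $0.4\cdot 10+0.4\cdot 10+0.2\cdot 0=8$). But any finite-memory sure-winning strategy must, against the adversary that always returns from $p$ to $q$, eventually stop moving to $p$ --- otherwise that play visits rank $1$ infinitely often and rank $2$ never --- so every recurrent class of its induced Markov chain contains only $q$-states, and its expected cost is $10$. Hence $\costsf(\M)=10>8$, and your key lemma (``inside an \sgec, finite memory can $\epsilon$-approach the MDP value'') is false. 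The specific false step is ``because $C$ is an \sgec, every play that uses only positive-probability edges reaches that rank during each boost'': sure parity winning does not imply ever reaching the maximal rank; here $g^{\win}$ never visits $r$, so the boost phase neither terminates surely nor has finite expected length. Note also that the offending adversarial behaviour uses only positive-probability edges, so your previous-state-detection trick for zero-probability deviations does not touch the problem.

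The correct notion, used by the paper, is semantic and stronger: a \gec $C$ is an \sgec if from every state there is a \emph{finite-memory} strategy that reaches $C^{\me}$ (the states whose even rank exceeds every odd rank of $C$) with probability $1$ \emph{and} is parity winning on every play that fails to reach $C^{\me}$. In the example above, only $\set{q}$ qualifies, and the paper's algorithm returns the correct value $10$. Because this definition quantifies over strategies, it cannot be checked by solving a plain parity sub-game; the paper needs a reduction to parity-\buchi games (and then to mean-payoff parity games), following~\cite{CD11}, just to decide whether an EC is an \sgec (Lemma~\ref{lem:verify SGEC}), plus Lemma~\ref{lem:existence of max SGEC} so that maximal \sgecs are well defined and Theorem~\ref{thm:maximal SGEC decomposition} for the decomposition. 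Likewise, in your ``$\ge$'' direction, the observation that recurrent classes are closed under positive-probability edges and parity-winning along their plays only establishes your weak notion; upgrading this to the strong notion is precisely the content of Lemmas~\ref{lem:finite memory EC prob 1} and~\ref{lem:finite memory stays in SGEC}. Your closing instinct --- that the crux is reconciling sure termination of the ``boost'' against the full adversary with finite expected length against the stochastic player --- is exactly right, but the reconciliation must be built into the definition of \sgec itself rather than assumed of a parity-winning strategy.
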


\vspace*{-5pt}
The general approach is similar to the one we took in Section~\ref{sec:infinite memory}. That is, we remove from $\M$ all states that are not sure-winning for Player 1 in $\M^\P$, and proceed by reasoning about a certain type of ECs. However, for finite-memory strategies, we need a more restricted class of ECs than the \gecs that were used in Section~\ref{sec:infinite memory}. Indeed, a finite-memory strategy might not suffice to win the sure-parity condition in a \gec.

For a \gec $C$, let $k$ be the maximal odd priority in $C$, with $k=-1$ if there are no odd priorities. We define $C^\me=\set{q\in C: \alpha(q)>k\text{ and }\alpha(q)\text{ is even}}$.
 We say that a \gec $C$ in $\M$ is {\em super good} (\sgec, for short) if from every state $s\in C$, there exists a finite-memory strategy $f$ for $\M|_C^s$ such that the play of $\M$ under $f$ reaches $C^{\me}$ w.p. 1, and 
if the play does not reach $C^{\me}$, then it is parity winning.
We refer to $f$ as a {\em witness} to $C$ being a \sgec. If $C$ is not a \sgec, we refer to the states of $C$ that satisfy the above as {\em super-good states}.

We argue that \sgecs are the proper notion for reasoning about finite-memory strategies. Specifically, we show that in a \sgec, Player~1 can achieve $\epsilon$-optimal expected cost with a finite-memory strategy, and that every finite-memory winning strategy reaches a \sgec w.p. 1.

Our algorithm finds the maximal \sgecs of $\M$ and obtain an MDP $\M'$ in the same manner we did in Section~\ref{sec:infinite memory}, namely by assigning high weights to states not in \sgecs, and the optimal mean-payoff MDP value to states in \sgecs. As there, we claim that $\cost(\M')=\costsf(\M)$. The analysis of the algorithm as well as its concrete details, are, however, more intricate. 

We start by proving that the notion of maximal \sgecs is well defined. 
To this end, we present the following lemma, whose proof appears in Appendix~\ref{apx:existence of max SGEC}. Note that in the case of \gecs, the lemma was trivial.
\vspace*{-5pt}
\begin{lemma}
\label{lem:existence of max SGEC}
Consider \sgec $C$ and $D$, such that $C\cap D\neq \emptyset$, then $C\cup D$ is also a \sgec.
\end{lemma}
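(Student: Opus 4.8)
The plan is to reduce the claim to two facts about a single witness construction and to combine them via a composition principle for almost-sure reachability with a sure parity side-condition. First I would record the easy part: since \sgecs are in particular \gecs and $C\cap D\neq\emptyset$, the set $C\cup D$ is a \gec (it is an end component as a union of two intersecting end components, and its maximal rank is the maximum of two even numbers). Let $k_C,k_D$ be the maximal odd ranks of $C,D$; then the maximal odd rank of $C\cup D$ is $k_{CD}=\max\{k_C,k_D\}$, and I may assume w.l.o.g.\ that $k_C\ge k_D$, so $(C\cup D)^\me=\{q\in C\cup D:\alpha(q)>k_C,\ \alpha(q)\text{ even}\}\supseteq C^\me$. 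For a target set $T\subseteq C\cup D$, call a state $s$ \emph{$T$-good} if there is a finite-memory strategy in $\M|_{C\cup D}^s$ that reaches $T$ w.p.\ $1$ and is parity-winning on every play that avoids $T$. The goal is that every $s\in C\cup D$ is $(C\cup D)^\me$-good. Two routine principles drive the proof: \emph{monotonicity} (if $T\subseteq T'$ and $s$ is $T$-good then $s$ is $T'$-good) and \emph{composition} (if $s$ is $T_1$-good and every state of $T_1$ is $T_2$-good, then $s$ is $T_2$-good, by switching strategies upon reaching $T_1$; parity on the exceptional plays is inherited because it is determined by the infinite suffix).

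For $s\in C$ the \sgec-witness $f_C^s$ already suffices: since $C$ is an end component, every positive-probability move of the stochastic player from a $C$-state stays in $C$, so the play of $f_C^s$ stays in $C$; it reaches $C^\me\subseteq(C\cup D)^\me$ w.p.\ $1$ and wins parity on the $C^\me$-avoiding (hence $(C\cup D)^\me$-avoiding) plays. Thus every $s\in C$ is $(C\cup D)^\me$-good by monotonicity.

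The main work, and the main obstacle, is the case $s\in D$. By composition it suffices to show that every $s\in D$ is $C$-good, i.e.\ Player~1 can reach $C$ w.p.\ $1$ while surely winning parity on the plays that never enter $C$; note such plays stay in $(C\cup D)\setminus C=D\setminus C\subseteq D$. I would build this strategy by alternating the witness $f_D^s$ with short reachability attempts toward $C$: repeatedly play $f_D$ until $D^\me$ is hit (which happens w.p.\ $1$), then make a bounded-horizon attempt to reach $C$ using the strong connectivity of the end component $C\cup D$ (from any state a positive-probability path to $C$ of length $\le|C\cup D|$ exists, giving a uniform lower bound $p>0$ on success); on success switch to the $C$-strategy above, on failure resume $f_D$. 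Since infinitely many attempts are made on any play that avoids $C$, the failure probability is bounded by $(1-p)^n\to 0$, so $C$ is reached w.p.\ $1$.

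For parity on a play that avoids $C$: either some $f_D$-phase never reaches $D^\me$ (a $\MDPProb$-null event on which $f_D$ already wins parity), or $D^\me$ is visited infinitely often; in the latter case the play stays in $D$, so every odd rank it visits is $\le k_D$ while some even rank $>k_D$ recurs, forcing the maximal recurring rank to be even. Hence every $s\in D$ is $C$-good, and composing with the $C$-case gives that every $s\in C\cup D$ is $(C\cup D)^\me$-good, i.e.\ $C\cup D$ is a \sgec. The delicate point to get right is exactly this last parity verification: one must ensure that the interleaved reachability attempts cannot create a recurring odd rank above $k_D$, which is why the attempts must be confined to $D\setminus C$ until $C$ is actually reached.
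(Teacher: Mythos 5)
Your proof is correct and follows essentially the same route as the paper's: assume w.l.o.g.\ that $C$ has the larger maximal odd rank (so $C^\me\subseteq (C\cup D)^\me$), play the $C$-witness inside $C$, and from $D\setminus C$ interleave the $D$-witness with bounded-horizon reachability attempts toward $C$, verifying parity on $C$-avoiding plays by noting they stay in $D$ and either eventually follow the $D$-witness forever or visit $D^\me$ infinitely often while all odd ranks seen are at most $k_D$. The differences (your explicit monotonicity/composition principles, the order of the alternation, and the spelled-out $(1-p)^n$ bound) are cosmetic rather than substantive.
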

\vspace*{-5pt}
Intuitively, we prove this by considering witnesses $f,g$ for $C$ and $D$ being \sgecs. We then modify $f$ such that from every state in $C$, it tries to reach $D$ for $N$ steps, for some parameter $N$. Once $D$ is reached, $g$ takes over. If $D$ is not reached, $f$ attempts to reach $C^{\me}$. Thus, w.p. 1, the strategy reaches $D^\me$, and if it does not, it either reaches $C^\me$ infinitely often, or wins the parity condition.

%
Next, we note that unlike the syntactic definition of \gecs, the definition of \sgecs is semantic, as it involves a strategy. Thus, finding the maximal \sgecs adds another complication to the algorithm. In fact, it is not hard to see that even checking whether an EC is a \sgec is parity-games hard. Using techniques from~\cite{CD11}, we show in Appendix~\ref{apx:verify SGEC} that we can reduce the latter to the problem of solving a parity-\buchi game. 
We thus have the following lemma.
\vspace*{-5pt}
\begin{lemma}
\label{lem:verify SGEC}
Consider an EC $C$ in a parity-MDP $\M$. We can decide whether $C$ is a \sgec in NP$\cap$ co-NP, as well as compute a witness strategy and, if $C$ is not a \sgec, find the set of super-good states.
\end{lemma}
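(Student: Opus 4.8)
The plan is to reduce the question to solving a single two-player game on the arena underlying $C$, whose winning region turns out to be exactly the set of super-good states. First I would dispose of the trivial case: since every \sgec is in particular a \gec, if the (syntactic) maximal rank of $C$ is odd we immediately answer ``no''; otherwise $C$ is a \gec and $C^\me\neq\emptyset$. Throughout I work inside $\M|_C$ and make $C^\me$ absorbing. The genuine difficulty is that the defining condition of a super-good state mixes a \emph{probabilistic} requirement (reach $C^\me$ with probability $1$) with a \emph{sure} requirement (every play avoiding $C^\me$ is parity winning), and the sure part must hold even on the measure-zero set of plays on which the probabilistic part fails. I would capture both in one $\omega$-regular objective by replacing the stochastic Player~2 with an adversary and adding a \emph{fairness} hypothesis on that adversary's moves.

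Concretely, I would form the two-player game $G_C$ whose arena is $\M|_C$ with Player~2 treated adversarially, where from a state $s\in S_2$ Player~2 may move to any $s'$ with $\MDPProb(s,s')>0$. Let $\mathrm{Fair}$ denote the strong-fairness event that every support-successor of a state visited infinitely often is itself taken infinitely often; a standard fact is that $\mathrm{Fair}$ holds with probability $1$ under every Player~1 strategy, so sure winning against fair adversaries coincides with almost-sure winning in the MDP. I would then set Player~1's objective to be $\Ev C^\me \vee (\mathrm{Parity} \wedge \neg\mathrm{Fair})$, with priorities inherited from $\alpha$ and $C^\me$ given a fresh dominating even priority. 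The claim to prove is: $s\in C$ is super-good iff Player~1 sure-wins this objective in $G_C$ from $s$. For the forward direction, a finite-memory witness $f$ induces a finite Markov chain all of whose bottom SCCs lie in $C^\me$, so every \emph{fair} play reaches $C^\me$ while every non-reaching play is parity winning and necessarily unfair. For the converse, a finite-memory game strategy is pulled back to the MDP, where $\Ev C^\me$ is forced on the probability-$1$ fair plays (giving reachability w.p.\ $1$) and $\mathrm{Parity}$ is forced on every non-reaching play (giving the sure condition).

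It remains to argue that solving $G_C$ lies in NP$\cap$co-NP and yields a finite-memory witness. Using the techniques of \cite{CD11}, the objective $\Ev C^\me \vee (\mathrm{Parity}\wedge\neg\mathrm{Fair})$ can be recast, after a polynomial gadget recording the pending fairness obligation, as the winning condition of a \emph{parity-\buchi} game, i.e.\ the conjunction of a parity condition and a \buchi condition. Such games admit finite-memory winning strategies and are solvable in NP$\cap$co-NP~\cite{CD11}, matching the finite-memory witness demanded by the definition of \sgec. Finally, solving $G_C$ from all initial states at once returns the entire winning region, which equals the set of super-good states; the same computation therefore decides whether $C$ is a \sgec (the winning region equals $C$), and otherwise outputs the super-good states together with a witness strategy extracted from a winning strategy.

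The step I expect to be the main obstacle is the equivalence between super-goodness and sure-winning the composite objective $\Ev C^\me \vee (\mathrm{Parity}\wedge\neg\mathrm{Fair})$. The delicate point is the interaction of the two requirements: the sure parity guarantee must survive on the probability-$0$ plays that never reach $C^\me$, and it is precisely the $\neg\mathrm{Fair}$ disjunct that lets a single objective simultaneously enforce ``reach $C^\me$ almost surely'' and ``parity always'' without the former trivializing the latter. Checking that a \emph{finite-memory} (rather than arbitrary) witness both suffices for super-goodness and is produced by the game solver, and that the translation between MDP strategies and fair-adversary game strategies preserves bottom-SCC structure in both directions, is the technical crux.
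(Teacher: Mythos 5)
Your plan shares the paper's broad strategy (replace the stochastic player by an adversary, force the probability-$1$ reachability requirement through a restriction on that adversary, and land in a parity-\buchi game solvable via~\cite{CD11}), and your ``easy'' direction is sound: since fair plays have probability $1$, any strategy that sure-wins $\Ev C^\me \vee (\mathrm{Parity}\wedge\neg\mathrm{Fair})$ is a \sgec witness. The gap is in the forward direction, and it sits exactly where you locate the crux. Your game objective must refer to fairness of the play in the \emph{arena} of $\M|_C$, but your BSCC argument only shows that plays which are fair \emph{in the product of the arena with the witness's memory} reach $C^\me$; the two notions coincide for memoryless strategies but not for the genuinely finite-memory witnesses that \sgecs require, because a fair adversary may correlate its choices with the witness's memory. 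Concretely, take a Player~2 state $s$ with support successors $a,b$ (Player~1 states), from each of which Player~1 can either return to $s$ or enter $C^\me$, and let the witness $f$ flip a memory bit at every visit to $s$, moving from $a$ into $C^\me$ exactly when the bit is $0$ and from $b$ exactly when the bit is $1$ (ranks can be arranged, e.g.\ by an odd-rank state of $C$ that $f$ never visits, so that $f$'s non-reaching plays are parity winning). Every bottom SCC of the product chain lies in $C^\me$, so $f$ reaches $C^\me$ with probability $1$; yet the adversary that alternates $a,b,a,b,\dots$ produces a play that never reaches $C^\me$ and \emph{is} arena-fair, since both successors of $s$ are taken infinitely often. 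So a \sgec witness need not win your game, and non-reaching plays are not ``necessarily unfair''; to rescue the direction you would have to produce some \emph{other} winning strategy, which your argument does not do (and which, if possible at all, is the real content of the lemma). The paper's gadget is designed precisely to dodge this correlation problem: instead of a global fairness assumption, Player~2 at each step either makes the move himself (a \buchi-accepting state) or \emph{delegates} the move to Player~1; ``the environment behaved randomly'' is then witnessed locally by Player~1's own delegated choices, and a finite-memory winner that reaches $C^\me$ after $n$ consecutive delegations does so with probability $1$ against the stochastic environment, since the environment matches those $n$ choices with positive probability.

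There is a second, independent gap in the complexity claim. Strong transition fairness is a Streett-type condition with one pair per Player-2 transition, so $\neg\mathrm{Fair}$ is a Rabin disjunction of linear index; there is no known polynomial ``gadget recording the pending fairness obligation'' that converts such a condition into a single \buchi condition --- obligation/appearance records for Streett conditions are exactly where an exponential enters --- and \cite{CD11} treats mean-payoff-parity and parity-\buchi games, not fairness-constrained ones, so it cannot simply be invoked here. In the paper's reduction, the delegation gadget is what performs this conversion: the global fairness requirement is replaced by the local \buchi condition ``Player~2 takes control infinitely often,'' yielding a polynomial-size parity-\buchi game, which is then reduced to positive mean-payoff parity games and solved in NP$\cap$co-NP by~\cite{CD11}.
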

\vspace*{-5pt}

Next, we show how to find the maximal \sgecs of $\M$. Essentially, for every odd rank $k$, we can find the \sgecs whose maximal odd rank is $k$ by removing all states with higher odd ranks, and recursively refining ECs by keeping only super-good states, using Lemma~\ref{lem:verify SGEC}. Thus, we have the following (see Appendix~\ref{apx:maximal SGEC decomposition} for complete details).
\vspace*{-5pt}
\begin{theorem}
\label{thm:maximal SGEC decomposition}
Consider a parity-MDP $\M$. We can find the maximal \sgecs of $\M$ in NP$\cap$co-NP.
\end{theorem}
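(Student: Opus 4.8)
The plan is to locate the maximal \sgecs by stratifying on the maximal odd rank that a \sgec may contain, and, at each level, to peel every end component down to its super-good core using the oracle supplied by Lemma~\ref{lem:verify SGEC}. Concretely, for each odd rank $k\in\set{1,3,\dots,d}$ (together with the degenerate value $k=-1$), let $\M_k$ be the parity-MDP obtained from $\M$ by deleting every state whose rank is odd and strictly larger than $k$. In $\M_k$ the only ranks exceeding $k$ are even, so for every EC $C$ of $\M_k$ that contains a state of rank $k$ we have $C^\me=\set{q\in C:\alpha(q)>k}$; crucially, this target is the \emph{same} set $T_k=\set{q:\alpha(q)>k\text{ and }\alpha(q)\text{ even}}$ for all such $C$. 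I would then run the following refinement inside $\M_k$: compute the maximal EC decomposition; for each EC $C$ invoke Lemma~\ref{lem:verify SGEC} to test whether $C$ is a \sgec; if it is (and it contains a rank-$k$ state), record it; if it is not, restrict the MDP to the set of super-good states returned by the lemma and recompute the EC decomposition within that set. I repeat until a fixpoint is reached. Finally, I collect all recorded \sgecs over all $k$ and keep the inclusion-maximal ones; by Lemma~\ref{lem:existence of max SGEC} two distinct maximal \sgecs cannot intersect, so they are pairwise disjoint and this last filtering step is immediate.

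Soundness is built in, since every recorded component has passed the \sgec test. Termination follows because each refinement step that does not halt discards at least one non-super-good state, so each level $k$ triggers at most $|S|$ EC decompositions and $|S|$ calls to Lemma~\ref{lem:verify SGEC}, and there are only $O(d)$ levels. The heart of the proof is completeness together with maximality: I must show that a maximal \sgec $C$, whose maximal odd rank is some $k_0$, is reproduced in the iteration for $k=k_0$. Here I use that $C$ survives the deletion of higher odd ranks (it contains none) and that, whenever $C$ is contained in a larger EC $E$ of $\M_{k_0}$, every state of $C$ is a super-good state of $E$: the witness strategy for $C$ being a \sgec stays in $C\subseteq E$, reaches $C^\me=C\cap T_{k_0}$ w.p.~$1$ (hence reaches the uniform target $T_{k_0}$), and is parity winning on the remaining measure-zero set, which is exactly the super-goodness condition relative to $T_{k_0}$. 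Consequently the states of $C$ are never discarded, $C$ (being strongly connected) stays inside a single recomputed end component, and re-emerges at the fixpoint; maximality of $C$ and the union-closure of Lemma~\ref{lem:existence of max SGEC} force that component to equal $C$.

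For the complexity bound, every ingredient other than the \sgec test — maximal EC decomposition, attractor computation, restriction of the MDP, and the final inclusion-maximality filtering — is polynomial, and the value computations reuse Theorem~\ref{thm:solving MDP in P}; the only super-polynomial step is the test of Lemma~\ref{lem:verify SGEC}, which lies in NP$\cap$co-NP and is invoked only polynomially often. Thus the procedure runs in polynomial time relative to a parity-game oracle, and I would lift this to a genuine NP$\cap$co-NP bound by the standard certification argument: parity(-\buchi) games are determined, and both a positive and a negative verdict are witnessed by a polynomially checkable memoryless winning strategy, so an NP machine may guess, for every query along the (unique, as answers are determined) oracle computation, the query together with a certifying strategy, verify all certificates and the consistency of the simulated run in polynomial time, and accept; the dual guesses give the co-NP machine. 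The main obstacle I anticipate is precisely the completeness argument of the preceding paragraph, namely reconciling the semantic, strategy-based definition of super-goodness with the purely combinatorial peeling-and-recompute refinement. Proving that discarding non-super-good states and recomputing end components is monotone and loses no \sgec is where the uniformity of the target $T_k$ at a fixed odd level and the union-closure of Lemma~\ref{lem:existence of max SGEC} must be combined carefully.
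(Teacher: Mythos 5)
Your proposal takes essentially the same route as the paper's proof: stratify by the maximal odd rank $k$, repeatedly peel each end component down to its super-good states using Lemma~\ref{lem:verify SGEC} until a fixpoint is reached, then keep the inclusion-maximal components, with completeness resting on exactly the paper's observation that the states of a maximal \sgec with maximal odd rank $k$ remain super-good in any enclosing EC at level $k$ (your ``uniform target $T_k$'' point), so they are never discarded; your omission of the environment-attractor step is also harmless, since any EC recomputed after deleting states automatically avoids their environment attractor. One caveat: your rule of recording a component only if it ``contains a rank-$k$ state'' misfires at the degenerate level $k=-1$, where no state has rank $-1$, so a \sgec all of whose states have even rank would never be recorded at any level; dropping that side condition (as the paper does --- its subroutine returns any \sgec it finds) repairs this without harming soundness, since every recorded component has passed the \sgec test regardless.
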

\vspace*{-5pt}

Theorem~\ref{thm:maximal SGEC decomposition} shows that our algorithm for computing $\costsf(\M)$ solves the problem in NP$\cap$co-NP. 
It remains to prove its correctness.
First, Lemma~\ref{lem: SGEC value} justifies the assignment of costs within a \sgec.
\vspace*{-5pt}
\begin{lemma}
\label{lem: SGEC value}
Consider a \sgec $C$ in $\M$ and a state $s$ in $C$. Let $v(s)=\cost(\M^{\MDP}|^s_C)$. 
Then, for every $\epsilon>0$, there exists a finite-memory strategy $f$ of $\M|^s_C$ with $\costs(f)\le v(s)+\epsilon$. 
\end{lemma}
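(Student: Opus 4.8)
The plan is to mirror the infinite-memory argument of Lemma~\ref{lem: GEC value}, while exploiting the defining property of a \sgec to keep the memory finite. Fix $\epsilon>0$. Let $g$ be an optimal memoryless strategy for the MDP $\M^\MDP|_C$; by Theorem~\ref{thm:solving MDP in P} it attains the value $v(s)=\cost(\M^\MDP|_C)$, which (recall) is independent of the initial state, so $g$ realizes it from every state of $C$. Let $f_w$ be a finite-memory witness to $C$ being a \sgec, so that from every state $f_w$ reaches $C^\me$ w.p. $1$, and on the measure-zero set of plays that never reach $C^\me$ the play is parity winning. I would then define a finite-memory strategy $f$ on $\M|^s_C$ that alternates two phases for a fixed parameter $N$: in the \emph{mean-payoff phase} it plays $g$ for exactly $N$ steps using a bounded counter, and in the \emph{recovery phase} it restarts $f_w$ from its initial memory and follows it until $C^\me$ is reached, whereupon it returns to the mean-payoff phase. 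The memory of $f$ is the disjoint union of the counter $\set{0,\dots,N}$ and the memory of $f_w$, hence finite, and since $g$ and $f_w$ are strategies of $\M|_C$, so is $f$.

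First I would show that $f$ is surely parity winning, which (as always) hinges on the structure of $C^\me$. Every odd priority in $C$ is at most $k$, so every priority exceeding $k$ — in particular every priority of a state in $C^\me$ — is even. Consider any play of $f$. It cannot stay in a mean-payoff phase forever, as that phase lasts $N$ steps. If it switches phases infinitely often it visits $C^\me$ infinitely often, so the maximal priority seen infinitely often exceeds $k$ and is therefore even. Otherwise the play eventually enters a recovery phase and never leaves it, i.e.\ it follows $f_w$ forever without reaching $C^\me$, and by the witness property such a play is parity winning. Hence $f$ is winning and $\costs(f)=\cost(f)$.

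It then remains to bound $\cost(f)$. Since $f$ is finite-memory it induces a finite Markov chain, and the start of each mean-payoff phase is a regeneration point, so by a renewal-reward argument $\cost(f)$ equals the expected cost of one cycle divided by the expected length of a cycle. A cycle is $N$ steps of $g$ followed by a recovery phase of expected length $T$, the expected hitting time of $C^\me$ under $f_w$; as $f_w$ reaches $C^\me$ w.p.\ $1$ in the finite chain it induces, $T$ is finite and does not depend on $N$. The expected cost of the mean-payoff phase is $N\bar c_N$ with $\bar c_N\to v(s)$ as $N\to\infty$ (Ces\`aro convergence of the average cost of $g$ from any entry state in $C^\me$), and the recovery phase contributes at most $WT$, where $W$ is the maximal cost. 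Thus $\cost(f)\le \frac{N\bar c_N + WT}{N+T}$, which tends to $v(s)$ as $N\to\infty$, so choosing $N$ large enough yields $\cost(f)\le v(s)+\epsilon$.

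The main obstacle — and the conceptual reason a \emph{fixed} $N$, and hence finite memory, suffices here — is precisely the \sgec property. In a mere \gec the reachability attempt had to be abandoned after finitely many steps in favour of a separate, possibly costly, sure-parity strategy, and the abandonment probability could be driven to $0$ only by letting $N$ grow across iterations, forcing infinite memory. Here $f_w$ never abandons: the plays that fail to reach $C^\me$ are already parity winning, so the recovery phase may run $f_w$ to completion with its own finite memory and the same $N$ can be reused in every cycle. The delicate points I would verify carefully are that $T$ is finite and independent of $N$, that a single memoryless $g$ achieves the state-independent value throughout the communicating MDP $\M^\MDP|_C$, and that the renewal-reward limit is unaffected by the measure-zero plays that remain in a recovery phase.
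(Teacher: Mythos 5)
Your proposal is correct and follows essentially the same route as the paper's proof: the paper also interleaves an optimal memoryless MDP strategy $g$ (played for $k$ steps) with the finite-memory \sgec witness (played until $C^\me$ is reached), argues parity-winning exactly as you do, and lets $k\to\infty$ to approach $v(s)$. Your renewal-reward computation is just a more explicit version of the paper's observation that the witness reaches $C^\me$ within a stochastically bounded number of steps, so the recovery phases become negligible as $N$ grows.
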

\vspace*{-10pt}
\begin{proof}
Let $g$ be a memoryless strategy such that $\cost(g)=\cost(\M^{\MDP}|_C^s)$. By Theorem~\ref{thm:solving MDP in P} such a strategy exists. Let $h$ be a finite-memory strategy that witnesses $C$ being a \sgec.
For every $k\in \Nat$, consider the strategy $f_k$ that repeatedly plays $g$ for $k$ steps and then plays $h$ until $C^{\me}$ is reached.
Since $g$ and $h$ are finite-memory, then $f_k$ is finite memory. In addition, observe that $h$ reaches $C^{\me}$ w.p. 1, and if $C^{\me}$ is not reached, then $h$ is parity-winning. Thus $f_k$ is parity-winning, and it reaches Step 1 infinitely often w.p. 1. Moreover, since $h$ has finite memory, then for every $n\in \Nat$, there is a bounded probability $0<p(n)\le 1$ that $f$ reaches $C^{\me}$ within $n$ steps, with $\lim_{n\to \infty}p(n)=1$. Thus, we get that $\lim_{k\to \infty}\scost(f_k)=\scost(g)=\cost(\M^{\MDP}|^s_C)$, which concludes the proof.
\end{proof}

\vspace*{-10pt}
Lemma~\ref{lem: SGEC value} implies that we can approximate the optimal value of \sgecs with finite-memory strategies. 
It remains to show that it is indeed enough to consider \sgecs. Consider a finite-memory strategy $f$. Then, w.p. 1, $f$ reaches an EC. Let $C$ be an EC with $\Pr_{\M}(\Inf(f)=C)>0$. The following lemma characterizes an assumption we can make on the behavior of $f$ in such an EC.
\vspace*{-5pt}
\begin{lemma}
\label{lem:finite memory EC prob 1}
Consider a parity-MDP $\M$ and an EC $C$. For every finite-memory strategy $f$, if $\Pr_{\M}(\Inf(f)=C)>0$, then there exists a finite-memory strategy $g$ such that for every $s\in C$, we have that $\Pr_{\M^s}(\Inf(g)=C)=1$ and every play of $g$ from $s$ stays in $C$. Moreover, if $f$ is parity winning, then so is $g$.
\end{lemma}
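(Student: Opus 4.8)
The plan is to realize the finite-memory strategy $f$ as a deterministic transducer, pass to the finite Markov chain it induces, isolate the recurrent class that witnesses $\Inf(f)=C$, and then let $g$ ``restart'' $f$ inside that class. First I would fix the finite memory $M$, the initial memory $init$, and the functions $next$ and $act$ describing $f$, and form the product chain $\M_f$ over $S\times M$: a Player-$1$ state $(s,m)$ (with $s\in S_1$) has the single successor $(\delta_1(s,act(s,m)),next(s,m))$, while a Player-$2$ state $(s,m)$ (with $s\in S_2$) moves to $(\delta_2(s,a),m)$ with probability $\MDPProb(s,a)$. Then $\M_f$ is a finite Markov chain, and with probability $1$ the set of product states visited infinitely often is exactly one bottom strongly connected component (BSCC); call it the \emph{limit class} of the path. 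Writing $\pi$ for the projection $(s,m)\mapsto s$, the value $\Inf(f)$ equals $\pi$ of the limit class, so $\Pr_{\M}(\Inf(f)=C)=\sum_{B:\,\pi(B)=C}\Pr_{\M}(\text{the limit class is }B)$. Since this sum is positive, there is a BSCC $B$ of $\M_f$ with $\pi(B)=C$ that is reachable from $(s_0,init)$ with positive probability; I fix such a $B$.

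Second, I would define $g$ to imitate $f$ but started inside $B$: for a start state $s\in C$, since $\pi(B)=C$ there is some $m_s$ with $(s,m_s)\in B$; let $g$ initialize its memory to $m_s$ and thereafter use $f$'s functions $act$ and $next$. A single finite-memory $g$ serving all $s\in C$ is obtained by adding one fresh initialization symbol that, upon reading the first state $s$, jumps to $m_s$; this is routine and I would only remark on it. Because a BSCC is closed under every positive-probability transition of $\M_f$, every sample path of $g$ from $(s,m_s)$ stays in $B$, hence it stays in $C$ under $\pi$, giving ``every play of $g$ from $s$ stays in $C$''. Moreover $B$, being a BSCC, is an irreducible recurrent class, so from $(s,m_s)$ almost every sample path visits all of $B$ infinitely often; projecting by $\pi$ yields $\Inf(g)=\pi(B)=C$ with probability $1$, i.e. $\Pr_{\M^s}(\Inf(g)=C)=1$.

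Third, for the parity claim I would exploit that parity is determined by the tail of a play together with the reachability of $B$. Since $\Pr_{\M}(\text{the limit class is }B)>0$, each $(s,m_s)\in B$ is reached from $(s_0,init)$ along a finite prefix $\tau$ of positive probability in $\M_f$; every Player-$2$ step of $\tau$ uses an action $a$ with $\MDPProb(s,a)>0$, hence $a\in\Act_2(s)$, so $\tau$ is a legal play prefix of the parity game $\M^{\P}$ consistent with $f$. Suppose, towards a contradiction, that some play $\sigma$ of $g$ from $s$ (equivalently, of $f$ from the product state $(s,m_s)$ against an adversarial Player~$2$) violated parity. Concatenating $\tau$ with $\sigma$ at the shared state $(s,m_s)$ yields a play of $\M^{\P}$ from $(s_0,init)$ that is consistent with $f$ (the memory after $\tau$ is exactly $m_s$, so $f$'s later moves agree with $g$'s) and whose set of infinitely-visited states equals that of $\sigma$; this play would also violate parity, contradicting that $f$ is parity winning. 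Hence $g$ is parity winning from every $s\in C$. Note that this makes an adversarial Player~$2$ escaping $C$ through a zero-probability action harmless: $g$ keeps mimicking the winning $f$, so parity is preserved along every adversarial play, whereas ``staying in $C$'' is asserted only for the positive-probability plays of the second step.

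The main obstacle I anticipate is the bookkeeping in the parity-winning transfer: one must verify that the memory threading makes $\tau\cdot\sigma$ a genuine $f$-consistent play, so that the prefix $\tau$ — which is only guaranteed to have positive probability — is nonetheless a legitimate adversarial play prefix, and one must keep the two semantics cleanly separated, namely ``stays in $C$'' in the probabilistic reading versus ``parity winning'' in the adversarial game reading. The remaining ingredients — existence of the BSCC $B$, its closure and recurrence, and the uniform initialization of $g$ — are standard facts about finite Markov chains and finite-memory strategies.
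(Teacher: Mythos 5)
Your proof is correct, and it takes a genuinely different route from the paper's. The paper never constructs the product chain: it considers the finite family $F=\{f_h : h \mbox{ a finite history}\}$ of residual strategies of $f$ (finitely many, one per memory element), and establishes by two probabilistic contradiction arguments that some $f_h$ satisfies $\Pr_{\M^s}(\Inf(f_h)=C)=1$, and that among those some member has all positive-probability plays staying in $C$; the parity claim is then handled in one line by prefix-independence. You instead make the structure explicit: the product of $f$ with $\M$ is a finite Markov chain, almost surely the limit set is a BSCC, so positivity of $\Pr_{\M}(\Inf(f)=C)$ yields a reachable BSCC $B$ with $\pi(B)=C$, and restarting $f$ inside $B$ gives $g$ directly. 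This buys several things the paper leaves implicit: existence is constructive rather than by contradiction, since the good restart memories are exactly those occurring in $B$; the requirement that a single $g$ serve \emph{every} start state $s\in C$ is met by the per-state initialization $m_s$ with $(s,m_s)\in B$, whereas the paper's candidate $g=f_h$ carries one fixed starting memory and the paper does not address why a single residual strategy works from all of $C$; and your concatenation argument $\tau\cdot\sigma$ makes precise why parity transfers --- the prefix $\tau$ is a legal adversarial prefix because positive-probability actions are available actions --- while keeping the probabilistic reading (``stays in $C$'') cleanly separated from the adversarial reading (``parity winning''), a distinction the paper elides. The paper's argument is shorter and avoids the product construction; yours is the more careful and more standard write-up of the same underlying phenomenon, resting on the two facts both proofs share: finiteness of the memory and prefix-independence of the parity condition.
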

\vspace*{-5pt}
Intuitively, we show that there exists some finite history $h$ such that the strategy $f_h$, which is $f$ played after seeing the history $h$, has the following property: $f_h$ reaches and stays in $C$, and w.p. 1 visits infinitely often all the states in $C$, and in particular $C^{\me}$. 
For the proof, we consider the set $F=\set{f_h:h\text{ is a finite history}}$. Since $f$ has finite memory, it follows that this set is finite. Using this, we show that if $\Pr_\M(\Inf(g)=C)<1$ for every $g\in F$, then $\Pr_\M(\Inf(f)=C)=0$, which is a contradiction. 
Finally, since $f$ is also parity winning, it follows that $f_h$ above is also parity-winning, and is thus a witness for $C$ being a \sgec. 
The full proof appears in Appendix~\ref{apx:finite memory EC prob 1}.

Finally, by Lemma~\ref{lem:finite memory stays in SGEC}, we can assume that once $f$ reaches an EC $C$, it stays in $C$ and visits all its states infinitely often w.p. 1. Since $f$ is parity-winning, it follows that $C$ has a maximal even rank, and that $f$ reaches $C^{\me}$ w.p. 1. Moreover, in every play that does not reach $C^{\me}$, $f$ wins the parity condition. We can thus conclude with the following Lemma, which completes the correctness proof of our algorithm for computing $\costsf(\M)$. See Appendix~\ref{apx:finite memory stays in SGEC} for the proof.
\vspace*{-5pt}
\begin{lemma}
\label{lem:finite memory stays in SGEC}
Consider a parity-MDP $\M$ and an EC $C$. For every finite-memory strategy $f$, if $f$ is parity winning and $\Pr_{\M}(\Inf(f)=C)>0$, then $C$ is a \sgec.
\end{lemma}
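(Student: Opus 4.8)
The plan is to reduce the entire statement to Lemma~\ref{lem:finite memory EC prob 1}, which does all the heavy lifting. First I would apply that lemma to the given strategy $f$: since $f$ is finite-memory, parity winning, and $\Pr_{\M}(\Inf(f)=C)>0$, the lemma yields a finite-memory strategy $g$ that is again parity winning and such that, from every $s\in C$, every play of $g$ stays in $C$ and $\Pr_{\M^s}(\Inf(g)=C)=1$. In other words, started anywhere in $C$, the strategy $g$ never leaves $C$, almost surely visits every state of $C$ infinitely often, and all of its plays satisfy $\alpha$.

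Next I would show that $C$ is a \gec. Because $\Pr_{\M^s}(\Inf(g)=C)=1$, there is at least one play $\rho$ of $g$ with $\Inf(\rho)=C$; since $g$ is surely parity winning, $\rho$ is winning, so $\maxs{\alpha(q):q\in C}$ is even. Hence $C$ is a \gec, and writing $m$ for this maximal rank and $k$ for the maximal odd rank of $C$, we get $m>k$, so the states realizing $m$ lie in $C^{\me}$; in particular $C^{\me}\neq\emptyset$.

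Then I would verify that $g$ itself, restricted to $C$ (which is legitimate as a strategy for $\M|_C^s$ since every play of $g$ from $s$ stays in $C$), witnesses that $C$ is a \sgec, by checking the two defining requirements. For the reachability requirement, since $g$ visits every state of $C$ — and hence every state of the nonempty set $C^{\me}$ — infinitely often w.p. 1, it reaches $C^{\me}$ w.p. 1. For the second requirement, any play of $g$ that fails to reach $C^{\me}$ is still a play of $g$, and $g$ is surely parity winning, so that play wins the parity condition. As both properties hold from every $s\in C$, the strategy $g$ witnesses that $C$ is a \sgec, completing the argument.

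I expect no serious obstacle: all the genuine difficulty is already absorbed into Lemma~\ref{lem:finite memory EC prob 1}, which upgrades the mere positive-probability recurrence of $f$ on $C$ into almost-sure recurrence and confinement of a new strategy $g$ while preserving \emph{sure} parity winning. The only point requiring a moment's care is matching the semantic definition of \sgec to the properties of $g$: one must note that ``visiting all of $C$ infinitely often'' yields ``reaches $C^{\me}$ w.p. 1'' only after establishing $C^{\me}\neq\emptyset$, which is precisely the \gec property extracted from the sure-winning guarantee.
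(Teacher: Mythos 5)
Your proposal is correct and follows essentially the same route as the paper: apply Lemma~\ref{lem:finite memory EC prob 1} to obtain the confined, almost-surely recurrent, surely parity-winning strategy $g$, and then check that $g$ itself is a witness for $C$ being a \sgec. The only difference is that you make explicit the intermediate step that $C$ is a \gec (so that $C^{\me}\neq\emptyset$), which the paper states in the main text but leaves implicit in its appendix proof — a point of care, not a divergence.
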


\vspace*{-20pt}
\subsection{Comparison with Related Work}
\label{subsec:overview_discussion}
Both our work and~\cite{BFRR14a,CR15} analyze ECs and reduce the problem to reasoning about an MDP that ignores the hard constraints.
The main difference with~\cite{BFRR14a} is that there, the hard and soft constraints have the same objective (i.e., worst-case mean-payoff value and expected-case mean-payoff value).
In~\cite{BFRR14a}, the strategy played for $N$ rounds to satisfy the soft objective and then at most $M$ rounds to satisfy the hard objective, for some constants $N$ and $M$.
In our setting, we cannot bound $M$, and in fact it might be the case that Player~1 would play to satisfy the parity objective for the rest of the game (i.e., forever) even after reaching a super-good end component.

The difference with~\cite{CR15} is twofold. First, technically, the type of hard constraints in~\cite{CR15} is worst-case mean-payoff, whereas our setting uses the Boolean parity condition. In classical parity games, the parity condition can be reduced to a mean-payoff objective. Similar reductions, however, do not work in order to reduce our setting to the setting of~\cite{CR15}. Thus, our contribution is orthogonal to~\cite{CR15}. Secondly, Boolean constraints are conceptually different than quantitative constraints, and as we demonstrate in Section~\ref{sec:applications}, they arise naturally in quantitative extensions of Boolean paradigms.

We note that~\cite{CR15} also study a relaxation in which  almost-sure winning is allowed for the hard constraints. An analogue in our setting is to consider an almost-sure parity condition. We note that in such a setting, \gecs are sufficient for reasoning both about finite-memory and infinite-memory strategies. 
Moreover, the preprocessing involves solving an almost-sure parity MDP (without mean-payoff constraints), which can be done in polynomial time.
Thus, as is the case in~\cite{CR15}, we can compute the cost of an MDP with almost-sure hard constraints in polynomial time.

\section{Applications}
\label{sec:applications}

In this section we study two applications of parity-MDPs. Both extend the Boolean synthesis problem. 
Due to lack of space, our description is only an overview. Full definitions and details can be found in Appendix~\ref{app app}.
We start with some basic definitions.

For finite sets $I$ and $O$ of input and output signals, respectively, an {\em $I/O$  transducer} is $\T=\zug{I,O,Q,q_0,\delta,\rho}$, where $Q$ is a set of states, $q_0 \in Q$ is an initial state, $\delta: Q\times \tIN \to Q$ is a total (deterministic) transition function, and $\rho:Q\to \tOUT$ is a labeling function on the states. The run of $\T$ on a word $w=i_0 \cdot i_1 \cdots \in \tINo$ is the sequence of states $q_0,q_1,\ldots$ such that $q_{k+1} = \delta(q_k,i_{k})$ for all $k \geq 0$. The {\em output} of $\T$ on $w$ is then $o_1,o_2,\ldots\in \tOUTo$ where $o_k=\rho(q_{k})$ for all $k\ge 1$. Note that the first output assignment is that of $q_1$, and we do not consider $\rho(q_0)$. This reflects the fact that the environment initiates the interaction. The {\em computation of $\T$ on $w$\/} is then 
$\T(w)=i_0\cup o_1,i_1\cup o_2,\ldots \in (2^{I \cup O})^\omega$.
When $Q$ is a finite set, we say that the transducer is  finite.

The synthesis problem gets as input a specification $L \subseteq (2^{I \cup O})^\omega$ and generates a transducer $\T$ that realizes $L$; namely, all the computations of $\T$ are in $L$. The language $L$ is typically given by an LTL formula \cite{Pnu81} or by means of an automaton of infinite words.

\vspace*{-8pt}
\subsection{Penalties on Undesired Scenarios}
Recall that in the Boolean synthesis problem, the goal is to generate a transducer that associates with each infinite sequence of inputs an infinite sequence of outputs so that the result computation satisfies a given specification. Typically, some behaviors generated by the transducers may be less desired than others. For example, as discussed in Section~\ref{sec:intro}, designs that use fewer resources or minimize expensive activities are preferable. The input to the {\em synthesis with penalties\/} problem includes, in addition to the Boolean specification, languages of finite words that describe undesired behaviors, and their costs. 
The goal is to generate a transducer that realizes the specification and minimizes cost due to undesired behaviors. 

Formally, the input to the problem includes languages $L_1,\ldots,L_m$ of finite words over the alphabet $2^{I \cup O}$ and a penalty function $\gamma:\{1,\ldots,m\}\to \Nat$ specifying for each $1 \leq i \leq m$ the penalty that should be applied for generating a behavior in $L_i$. As described in Section~\ref{sec:intro}, the language $L_i$ may be local (that is, include only words of length 1) and thus refer only to activation of output signals, may specify short scenarios like flips of output signals, and may also specify rich regular scenarios. Note that we allow penalties also for behaviors that depend on the input signals. Intuitively, whenever a computation $\pi$ includes a behavior in $L_i$, a penalty of $\gamma(i)$ is applied. Formally, if $\pi=\sigma_1,\sigma_2,\ldots$, then for every position $j \geq 1$, we define ${\it penalty}(j) = \{i : \mbox{ there is } k \leq i \mbox{ such that } \sigma_k \cdot \sigma_{k+1} \cdots \sigma_j \in L_i\}$. That is, ${\it penalty}(j)$ points to the subset of languages $L_i$ such that a word in $L_i$ ends in position $j$. Then, the cost of position $j$, denoted $\cost(j)$, is $\sum_{i \in {\it penalty}(j)} \gamma(i)$. 
Finally, for a finite computation $\pi=\sigma_1,\sigma_2,\ldots$, we define its cost, denoted $\cost(\pi)$, as $\limsup_{m \rightarrow \infty}\frac{1}{m}\sum_{j=1}^{m}{\cost(j)}$.

Let $\A$ be a deterministic parity automaton (DPW, for short) over the alphabet $2^{I \cup O}$ that specifies the specification $\psi$. 
We describe a parity-MDP whose solution is a transducer that realizes $\A$ with the minimal cost for penalties. The idea is simple: on top of the parity game $\G$ described above, we compose monitors that detect undesired scenarios. We assume that the languages $L_1,\ldots,L_m$ and are given by means of deterministic automata on finite words (DFWs)  $\U_1,\ldots,\U_m$ where for every $1\le i\le m$, we have that $L(\U_i)=(2^{I\cup O})^*\cdot L_i$. That is, $\U_i$ accepts $\sigma_1\cdots \sigma_n$ iff there exists $k\le n$ such that $\sigma_k\cdots \sigma_n\in L_i$. Essentially, we turn $\A$ into a parity-MDP by composing it with the DFWs $\U_1,\ldots,\U_m$. Then, $\U_i$ reaching an accepting state indicates that the penalty for $L_i$ should be applied, which induces the costs in the parity-MDP. The probabilities in the parity-MDP are induced form the distribution of the assignments to input signals. The full construction can be found in Appendix~\ref{app penalties}. We note that an alternative definition can replace the DFWs $\U_1,\ldots,\U_m$ and the cost function $\gamma$ by a single weighted automaton that can be composed with $\A$.

\stam{
Let $\U_i=\zug{\tIN\times\tOUT,Q^i,q^i_0,\delta^i,\alpha^i}$
Let $S=Q \times S_1 \times \cdots S_m$ and $s_0=\zug{q_0,q_0^1,\ldots,q_0^m}$. We define the parity-MDP $\M=\zug{S \times 2^I,S,s_0,2^O,2^I,\delta_1,\delta_2,\MDPProb,\MDPcost,\alpha'}$ where for every $s=\zug{q,q^1,...,q^m}\in S$, $i\in \tIN$, and $o\in \tOUT$, we have the following. The transition functions are $\delta_1(\zug{s,i},o)=\zug{\delta(q,i \cup o),\delta^1(q^1,i \cup o),\ldots,\delta^m(q^m,i \cup o)}$, and $\delta_2(s,i)=\zug{s,i}$, the cost function is given by $\MDPcost(\zug{s,i})=0$ and $\MDPcost(s)=\sum_{j:q^j\in \alpha^j}\gamma(j)$, for the penally function $\gamma$, and the acceptance condition is $\alpha(s)=\alpha(\zug{s,i})=\alpha(q)$. Finally, we assume that the environment behaves uniformly. That is, in every step it outputs every $i\subseteq I$ with probability $2^{-|I|}$. Thus, $\MDPProb(s,i)=2^{-|I|}$. This assumption can easily be replaced by a different probabilistic model. 

It is easy to see that a winning strategy for Player~1 in $\M$ corresponds to a transducer that realizes $\A$, and that the cost of every computation is the average penalty along the computation. Thus, a solution to the synthesis with penalties problem amounts to solving $\M$. The size of $\M$ is polynomial in the size of the automata $\A,\U_1,\ldots \U_m$, and is exponential in $m$. However, we observe that the role of $\U_1,\ldots, \U_m$ is only for the purpose of costs, and does not affect the parity constraints. Thus, we can solve the problem in NP$\cap$co-NP in the size of the automata, and in time singly-exponential in $m$.
Finally, if $\A$ is obtained by translating an LTL formula $\psi$ into a DPW, then similarly to the case of Boolean synthesis, we can solve the problem in times doubly-exponential in the length of $\psi$, polynomial in $\U_1,\ldots,\U_m$, and singly-exponential in $m$.
}
 
\subsection{Sensing}
\label{sec:sensing}
Consider a transducer  $\T=\zug{I,O,Q,q_0,\delta,\rho}$. For a state $q\in Q$ and a signal $p\in I$, we say that $p$ is {\em sensed in\/} $q$ if there exists a set $S\subseteq I$ such that $\delta(q,S  \setminus\set{p})\neq \delta(q,S\cup \set{p})$. Intuitively, a signal is sensed in $q$ if knowing its value may affect the destination of at least one transition from $q$. We use $\sen(q)$ to denote the set of signals sensed in $q$. 
The {\em sensing cost\/} of a state $q\in Q$ is $\stcost(q)=|\sen(q)|$. 
For a finite run $r=q_1,\ldots,q_m$ of $\T$, we define the sensing cost of $r$, denoted $\scost( r)$, as $\frac{1}{m}\sum_{i=0}^{m-1}{\stcost(q_i)}$. That is, $\scost(r)$ is the average number of sensors that $\T$ uses during $r$. For a finite input sequence $w \in (2^I)^*$, we define the sensing cost of $w$ in $\T$, denoted $\scost_{\T}(w)$, as the sensing cost of the run of $\T$ on $w$. 
Finally, the sensing cost of $\T$ is the expected sensing cost of input sequences of length that tends to infinity, which is parameterized by a distribution 
on $(\tIN)^\omega$ given by a sequence of distributions $D_1,D_2,...$ such that $D_t:\tIN\to [0,1]$ describes the distribution over $\tIN$ at time $t\in \Nat$. For simplicity, we assume that the distribution is uniform. Thus, $D_t(i)=2^{-|I|}$ for every $t\in \Nat$. 
For the uniform distribution we have $\scost(\T)=\lim_{m\to \infty} |(2^I)|^{-m}\sum_{w \in (2^I)^m} \scost_{\T}(w)$. 

Note that this definition also applies when the transducer is infinite. However, for infinite transducers, the limit in the definition of $\scost(\T)$ might not exist, and we therefore define $\scost(\T)=\limsup_{m\to \infty}|\tIN|^{-m}\sum_{w\in (\tIN)^m}\scost_\T(w)$. Finally, for a realizable specification $L\in 2^{I\cup O}$, we define $\scost_{I/O}(L)=\inf\{\scost(\T): \T $ is an $I/O$ transducer that realizes $L\}$.

In~\cite{AKK15}, we study the sensing cost of {\em safety} properties. 
We show that there, a finite, minimally-sensing transducer, always exists (albeit of exponential size), and the problem of computing the sensing cost is EXPTIME-complete. In our current setting, however, a minimally-sensing transducer need not exist, and 
any approximation may require infinite memory. 
We demonstrate this with an example. 

\begin{example}
Let $I=\set{a}$ and $O=\set{b}$, and consider the specification $\psi=(\Alw \Ev a\wedge \Alw b)\vee \Alw(\neg b\to \Next\Alw(a\leftrightarrow b))$. Thus, $\psi$ states that either $a$ holds infinitely often and $b$ always holds, or, if $b$ does not holds at a certain time, then henceforth, $a$ holds iff $b$ holds. Observe that once the system outputs $\neg b$, it has to always sense $a$ in order to determine the output. 
The system thus has an incentive to always output $b$. This, however, may render $\psi$ false, as $a$ need not hold infinitely often.

We start by claiming that every finite-memory transducer $\T$ that realizes $\psi$ has sensing cost 1. Indeed, let $n$ be the number of states in $\T$. A random input sequence contains the infix $(\neg a)^{n+1}$ w.p. 1. Upon reading such an infix, $\T$ has to output $\neg b$, as otherwise it would not realize $\psi$ on a computation with suffix $(\neg a)^\omega$. Thus, from then on, $\T$ 
senses
$a$ in every state. So $\scost(\T)=1$.

However, by using infinite-memory transducers, we can follow the construction in Section~\ref{sec:infinite memory} and reduce the sensing cost arbitrarily close to $0$. Let $M\in \Nat$. We construct a transducer $\T'$ as follows. After initializing $i$ to $1$, the transducer $\T'$ senses $a$ and outputs $b$ for $iM$ steps. If $a$ does not hold during this time, then $\T'$ outputs $\neg b$ and starts sensing $a$ and outputting $b$ accordingly. Otherwise, if $a$ holds during this time, then $\T'$ stops sensing $a$ for ${2^{i}}$ steps, while outputting $b$. It then increases $i$ by $1$ and repeats the process. 
Note that $\T'$ outputs $\neg b$ iff $a$ does not hold for $iM$ consecutive positions at the $i$-th round (which happens w.p. $2^{-iM}$). Thus, the probability of $\T'$ outputting $\neg b$ in a random computation is bounded from above by $\sum_{i=1}^\infty 2^{-iM}=2^{-M}$, which tends to $0$ as $M$ tends to $\infty$. Note that in the $i$-th round, $\T'$ senses $a$ for only $iM$ steps, and then does not sense anything for $2^{iM}$ steps, so if $\T'$ does not output $\neg b$, the sensing cost is $0$. Thus, we have 
$\lim_{M\to \infty}\scost(\T')=0$. \qed


%

\end{example}

We proceed by describing the general solution to computing the sensing cost of a specification.
Recall that synthesis of a DPW $\A$ is reduced to solving a parity game.
When sensing is introduced, it is not enough for the system to win this game, as it now has to win while minimizing the sensing cost. Intuitively, not sensing some inputs introduces incomplete information to the game: once the system gives up sensing, it may not know the state in which the game is and knows instead only a set of states in which the game may be. In particular, unlike usual realizability, a strategy that minimizes the sensing need not use the state space of the \DPW. 

\vspace*{-5pt}
\begin{theorem}
\label{thm:sensing to parity-MDP}
Consider a $\DPW$ specification $\A$ over $2^{I\cup O}$. There exists a parity-MDP $\M$ such that $\costs(\M)=\scost_{I/O}(L(\A))$.	
Moreover, the number of states of $\M$ is singly exponential in that of $\A$, and the number of parity ranks on $\M$ is polynomial in that of $\A$.
\end{theorem}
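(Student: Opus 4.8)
The plan is to reduce $\scost_{I/O}(L(\A))$ to the sure cost of a parity-MDP $\M$ obtained by a \emph{knowledge-set} (subset) construction on $\A$, augmented with an adversarially chosen concrete run that certifies the parity condition. A Player~1 state is a nonempty knowledge set $K\subseteq Q$, where $Q$ is the state space of $\A$ and $K$ is the set of states the run of $\A$ may currently be in given what Player~1 has sensed so far; the initial state is $\{q_0\}$. In a state $K$, Player~1 chooses an action $(o,\sigma)$ consisting of an output $o\in\tOUT$ and a sensing set $\sigma\subseteq I$, and moves to a Player~2 state. Player~2 then supplies an actual input $i\in\tIN$; Player~1 observes only $\theta=i\cap\sigma$, so the successor knowledge set is $K'=\{\delta(p,j\cup o) : p\in K,\ j\in\tIN,\ j\cap\sigma=\theta\}$, and each observation $\theta$ is reached with probability $2^{-|\sigma|}$ under the uniform input distribution. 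To record the parity, the Player~2 state also carries a concrete pebble $q\in K$ that Player~2 moves to $\delta(q,i\cup o)\in K'$ (so the invariant $q\in K$ is preserved); the parity rank of the round is $\alpha(q)$, with auxiliary states receiving the minimal rank. The sensing cost $|\sigma|$ is charged as the round cost ($\MDPcost$), using one auxiliary state per round scaled exactly as in the penalties construction, so that the mean-payoff equals the average number of sensed signals. Then $\M$ has at most $2^{|Q|}\cdot|Q|$ states up to a constant factor, i.e.\ singly exponential in $\A$, while its ranks are those of $\A$, hence polynomially many.

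\textbf{Value-preserving correspondence.}
I would then set up a correspondence between $I/O$ transducers realizing $L(\A)$ and \emph{observation-based} Player~1 strategies (strategies depending only on the sequence of knowledge sets). For $\costs(\M)\le\scost_{I/O}(L(\A))$, from a transducer $\T$ I read off at each reachable memory the output $o$ it produces and its set $\sen(q)$ of sensed signals, use $(o,\sen(q))$ as Player~1's action, and track knowledge sets accordingly. Because $\T$ realizes $L(\A)$ on \emph{every} input sequence, every concrete pebble run is an accepting run of $\A$, so the induced strategy is \emph{sure} parity-winning; and since inputs are uniform, its expected mean-payoff equals $\scost(\T)$. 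For the converse $\costs(\M)\ge\scost_{I/O}(L(\A))$, an observation-based Player~1 strategy $f$ (finite- or infinite-memory) induces a transducer whose memory is the memory of $f$ and whose sensed set at each memory is contained in the chosen $\sigma$; sure winning translates, via the pebble, into correctness of $\T$ on all inputs, and $\scost(\T)\le\costs(f)$. Taking infima over both sides yields the claimed equality.

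\textbf{Main obstacle.}
The delicate part is the parity condition under incomplete information: justifying that tracking knowledge sets suffices while the pebble is adversarial and invisible to Player~1. Since Player~1's states are pure knowledge sets and the pebble lives only inside Player~2 states, I must argue that Player~1's sure-winning strategies can be taken to depend only on the sequence of knowledge sets it observes. This is the standard reduction of imperfect-information $\omega$-regular games to perfect-information games: the knowledge set is exactly the information Player~1 can deduce, and because Player~2 may realize every concrete trajectory consistent with the observations, a strategy winning in $\M^{\P}$ against all pebble choices is precisely one that realizes $\A$ under the sensing discipline it fixes, and conversely. I expect the invariant $q\in K$, together with the precise identification of ``sure winning in $\M^{\P}$'' with ``$\T$ realizes $L(\A)$'', to be the main technical content; that minimal-sensing strategies need not follow the state space of $\A$ (so infinite memory may be required, consistent with the preceding example) and that the auxiliary-state cost scaling reproduces $\scost$ exactly are routine given the earlier sections.
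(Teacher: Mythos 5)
Your first direction is sound: a realizing transducer $\T$ induces an observation-based strategy in your game, every pebble run is then the unique run of $\A$ on the actual computation, and under the uniform distribution the expected mean cost equals $\scost(\T)$. The gap is in the converse direction, and it is fatal rather than technical. Your $\M$ is a perfect-information parity-MDP, so by the paper's definition a Player~1 strategy is a function $f_1:S^*\times S_1\to \Act_1$ of the \emph{full state history}, and that history contains the Player~2 states carrying the pebble. Since the pebble moves by $\delta(q,i\cup o)$, its position encodes information about the unsensed part of the actual input, so Player~1 can cheat: condition on past pebble positions and win surely with less sensing than any transducer can achieve. Concretely, take $I=\set{a}$, $O=\set{b}$ and $\psi=\Alw(a\leftrightarrow \Next b)$. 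Every transducer realizing $\psi$ must sense $a$ at every step, so $\scost_{I/O}(L(\psi))=1$; but in your game a strategy that never senses can read the last actual input off the pebble (the state of the deterministic automaton for $\psi$ remembers it) and produce the required output, winning surely with cost $0$. Hence $\costs(\M)=0<\scost_{I/O}(L(\psi))$ and the claimed equality fails. Your appeal to ``the standard reduction of imperfect-information $\omega$-regular games to perfect-information games'' does not apply here: knowledge-set constructions are sound only when the winning condition is \emph{observable}, whereas your parity condition lives on the pebble, which is exactly the non-observable component. (There is also a structural defect: if Player~1 states are pure knowledge sets, the pebble cannot persist across rounds at all, since it is forgotten whenever the play returns to a Player~1 state; making it persist forces it into Player~1's states, which is precisely what lets Player~1 see it.)

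This is why the paper does not use a subset construction. Instead, the quantification ``accepting for every assignment of the unsensed inputs'' is pushed \emph{inside the automaton} as universal branching: $\A$ is turned into a \UPW $\A'$ over $\tIN\times\tIN\times\tOUT$ whose transitions depend on $i$ only through the sensed values, and $\A'$ is then determinized into a \DPW $\D$. Determinization (Safra/Piterman style, strictly richer than knowledge sets) is what lets a single deterministic -- hence perfect-information -- state decide acceptance of \emph{all} runs consistent with the observations. The resulting game inherits the crucial invariance that states $\zug{s,x,j}$ and $\zug{s,x,j'}$ with $j,j'\in \sfrac{i}{x}$ have identical futures, so Player~1 gains nothing from the unsensed inputs and can w.l.o.g.\ be taken observation-based; your construction lacks exactly this invariance. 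To salvage a pebble-style argument you would have to quantify the pebble universally inside the automaton (which is what the \UPW does) rather than let an adversary move it in a game graph where Player~1 can watch it.
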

\vspace*{-10pt}
\begin{proof}
Conceptually, we follow the ideas of Boolean synthesis, by thinking of $\A$ as a parity game between the system and the environment, as described in Section~\ref{sec:application defs}. The proof is comprised of several steps. First, intuitively, we give the system an option to sense only some input signals $x\subseteq I$, but require that then, the play must be winning for every assignment of the inputs that are not sensed. Then, we introduce costs induced by the number of sensed input signals in each state, and finally we add a uniform stochastic environment. 
Technically, however, the first step is done using automata, rather than games, and converts the $\DPW$ $\A$ into a {\em universal parity automaton\/} (UPW) -- an automaton in which a the transition function maps each state and letter to more than a single successor state, and a word is accepted if all the runs on it are accepting. We use the universal branches of the UPW in order to model the several possible assignments to the input signals that are not sensed. Thus, in state $s$ of $\A$, the system chooses a state $\zug{s,x}$, where $x\subseteq I$ represents the inputs to be sensed. The environment then chooses an assignment $i:I\to \set{0,1}$ for the inputs, and the system chooses an output assignment $o:O\to \set{0,1}$. However, instead of the new state being $\delta(s,i\cup o)$, a universal transition is taken to every state $s'$ such that $s'=\delta(s,i'\cup o)$ for some  $i'$ that agrees with $i$ on every input in $x$. Thus, effectively, the system has to play only according to the values of the sensed inputs. Note that the two players are not modeled in the automaton. Rather, their choices are represented by augmenting the alphabet to include a $2^I$ component to represent the sensed inputs.
Using automata allows us to determinize the $\UPW$ back to a $\DPW$ that already captures sensing. We then convert the automaton to a parity-game, and proceed as described above.

For the formal details, see Appendix~\ref{app sen construction}.
\end{proof}
\stam{

Conceptually, we follow the ideas of Boolean synthesis, by thinking of $\A$ as a parity game between the system and the environment, as described in Section~\ref{sec:application defs}. The proof is comprised of several steps. First, intuitively, we give the system an option to sense only some input signals $x\subseteq I$, but require that then, the play must be winning for every assignment of the inputs that are not sensed. Then, we introduce costs induced by the number of sensed input signals in each state, and finally we add a uniform stochastic environment. 
Technically, however, the first step is done using automata, rather than games, by converting the $\DPW$ $\A$ to a $\UPW$, as we explain below. Using automata allows us to determinize the $\UPW$ back to a $\DPW$ that already captures sensing. We then convert the automaton to a parity-game, and proceed as described above.


We now turn to formalize this intuition. 
In the following, we identify a subset $i\subseteq I$ with its characteristic function $i:I\to \set{0,1}$. 

Consider the $\DPW$ $\A=\zug{\tIN\times \tOUT,Q,q_0,\delta,\alpha}$. We obtain from $\A$ the $\UPW$ $\A'=\zug{\tIN\times \tIN\times \tOUT,Q,q_0,\delta',\alpha}$ with $\delta'$ defined as follows. Consider a letter $\zug{i,x,o}\in \tIN\times \tIN\times \tOUT$. We think of $i$ and $o$ as truth assignments for the input and output signals, respectively, and we think of $x$ as a set of sensed signals. Consider the set $\sfrac{i}{x}=\set{j\in \tIN: \forall p\in x,\ j(p)=i(p)}$. Intuitively, $\sfrac{i}{x}$ is the set of input assignments that agree with $i$ on all the signals in $x$. For a state $q\in Q$, we define $\delta'(q, \zug{i,x,o})=\set{\delta(q,(j,o)): j\in \sfrac{i}{x}}$. 

Intuitively, when thinking of $\A'$ as a game between the system and the environment, then at each step, the system chooses a set of sensed inputs $x$ and an output $o$. Then, the environment chooses a set of inputs $i$, but in the next step the system can only see the inputs in $i$ that are sensed in $x$, and thus moves universally with every input that agrees with $i$ on the sensed inputs in $x$.

We proceed to determinize $\A'$ to a DPW $\D=\zug{\tIN\times \tIN\times \tOUT,S,\rho,s_0,\beta}$. We then obtain from $\D$ a parity game, as described above, with Player~1 (the system) controlling the set of sensed inputs and the output, and Player~2 (the environment) controlling the concrete inputs. Formally, the game 
$G_\D=\zug{S_1\cup S_2,\start,\Act_1,\Act_2,\delta_1, \delta_2,\beta'}$ is defined as follows. The states are $S_1=(S\times \tIN\times \tIN)\cup \set{\start}$ and $S_2=S\times \tIN$. The actions for Player~1 in every state are $\Act_1=\tIN\times \tOUT$ and are $\Act_2=\tIN$ for Player~2 (we omit the state as the available actions are independent of the state). The transition function is defined as follows. For a state $\zug{s,x,i}\in S_1$ and action $\zug{x',o}\in A_1$ we have $\delta_1(\zug{s,x,i},\zug{x',o})=\zug{\rho(s,\zug{i,x,o}),x'}$ as well as $\delta_1(\start,\zug{x',o})=\zug{s_0,x'}$. 
For a state $\zug{s,x}\in S_2$ and action $i\in A_2$ we have $\delta_2(\zug{s,x},i)=\zug{s,x,i}$. 

Intuitively, the state $\zug{s,x,i}\in S_1$ represents that $\D$ is in state $s$, the system has chosen to sense the signals in $x$, and the environment gave the concrete input $i$. Then, the action $\zug{x',o}$ means that the system responded with output $o$, and chose to sense $x'$ in the next step, taking the game to the state $\zug{s',x'}$, where $s'=\rho(s,\zug{i,x,o})$. Then, in state $\zug{s',x'}$, the environment chooses a new concrete input $i'$.

We define the acceptance condition $\beta'$ as follows. For every $s\in S$ and $i,x\in \tIN$, we have $\beta'(\zug{s,x,i})=\beta'(\zug{s,x})=\beta(s)$, and we arbitrarily set $\beta'(\start)=0$ (since $\start$ is visited only once, this has no effect).

Note that crucially, for every $j,j'\in \sfrac{i}{x}$, the behavior of $G_\D$ from state $\zug{s,x,j}$ is identical to the behavior from $\zug{s,x,j'}$. This follows from the universal transitions in $\A'$. Thus, once Player~1 chooses $x$, the inputs that are not sensed do not play a role. This captures the fact that every winning strategy for the system must only rely on the values $j$ assigns to the sensed inputs $x$.

Finally, the parity-MDP $\M$ is obtained from $G_\D$ by fixing Player~2 with a uniform-stochastic strategy and adding costs according to the number of sensed inputs at each state. Recall that the actions of Player~2 are $\tIN$. Thus, in state $\zug{s,x}\in S_2$, the probability of Player~2 playing $j\in \tIN$ is $2^{-|I|}$. Note that by our observation above, every $j,j'\in \sfrac{i}{x}$, induce the same transitions. Thus, the probability of transition from state $\zug{s,x}$ to $\zug{s,x,j}$ is $2^{-|x|}$.

The cost function assigns cost $|x|$ to states $\zug{s,x}$ and $\zug{s,x,j}$, for every $s\in S$ and $j\in \tIN$.

We now proceed to analyze the correctness of the construction. Consider a (not necessarily finite) transducer $\T=\zug{I,O,T,t_0,\tau,\rho}$ that realizes the specification $\A$. We identify with $\T$ a strategy $f_\T$ for $\M$ as follows. 
In state $\start$ we have $f_\T(\start)=\zug{\sen(t_0),\rho(t_0)}$. Then, the strategy $f_\T$ keeps track of the state of $\T$ as follows. When $\T$ is in state $t$, and the state of the game is $\zug{s,x,i}$, let $t'=\tau(t,i)$. Then, we have that $f_\T(\zug{s,x,i})=\zug{\sen(t'),\rho(t')}$. Observe that $f_\T$ is essentially implemented by the transducer $\T$. In particular, if $\T$ has finite state space, then $f$ has finite memory.

We claim that $\costs(f_\T)=\scost(\T)$. We start by showing that $f_\T$ is sure winning in $\M$ (equivalently, that it is a winning strategy for Player~1 in $G_\D$). Consider an input sequence $\pi\in \tINs$, let $q$ and $t$ be the states that $\A$ and $\T$ reach, respectively, when they interact on $\pi$. let $x=\sen(t)$, then for every $i,j\in \tIN$ such that $j\in \sfrac{i}{x}$ we have that $\tau(t,i)=\tau(t,j)$. Thus, the behavior of $\T$ from $\delta(q,i\cup \rho(t))$ and from $\delta(q,j\cup \rho(t))$ is the same. It follows that $\T$ induces a realizing strategy for the UPW $\A'$ (and hence a winning strategy for $G_\D$), where the additional $\tIN$ component in the alphabet represents the sensing of the current state of $\T$. However, this is exactly the behavior prescribed by $f_\T$, so $f_\T$ is winning in $G_\D$. 

Next, observe that by the above, for every input sequence $\pi\in \tINo$, the (prefix of the) play of $G_\D$ induced by Player~1 playing $f_\T$ and Player~2 playing $\pi$ is $r=\start,\zug{s_1,x_1},\zug{s_1,x_1,\pi_1},...,$ $\zug{s_m,x_m},\zug{s_m,x_m,\pi_m}$, and we have that $\cost_m(f_\T,\pi)=\frac{1}{2m+1}(\sum_{k=1}^m 2\cdot|x_k|)$, while for the run $r=t_1,t_2,...,t_m$ of $\T$ on the first $m$ letters of $\pi$ we have that $\scost(r)=\frac{1}{m}\sum_{k=1}^m \scost(t_k)$. By the definition of $\T$ and $\M$, we have $\scost(t_k)=|x_k|=\MDPcost(\zug{s_k,x_k})=\MDPcost(\zug{s_k,x_k,\pi_k})$. Moreover, the probabilities of $\M$ imply that every $\pi$ such that $|\pi|=m$ is played w.p. $|\tIN|^{-m}$. Thus, by taking $m\to \infty$, we get $\costs(f_\T)=\scost(\T)$. 

Since this is true for every realizing transducer $\T$, it follows that $\costs(\M)\le\scost_{I/O}(L(\A))$.

Conversely, consider a strategy $f$ for $\M$. A-priori, $f$ can behave differently in states $\zug{s,x,i}$ and $\zug{s,x,j}$ for $j\in \sfrac{i}{x}$. However, as we observed above, the construction of $\A'$ (and thus of $\D$) implies that $f$ cannot decrease its cost by doing so, since the behavior of $\A'$ is the same in both states. Thus, we can assume w.l.o.g that $f$ only depends on the values $i$ assigns to the sensed inputs $x$. Now, $f$ induces a (possibly infinite) transducer $\T_f$ in an obvious manner - whenever $f$ outputs $\zug{x,o}$, the transducer outputs $o$. Similar arguments as the converse direction show that $\costs(f)=\scost(\T_f)$, and thus $\costs(\M)\ge\scost_{I/O}(L(\A))$, and we are done.
\end{proof}
}
\vspace*{-12pt}
\begin{theorem}
\label{thm:compute sensing cost}
Consider a $\DPW$ specification $\A$ over $2^{I\cup O}$. We can compute $\scost_{I/O}(L(\A))$ in singly-exponential time. Moreover, the problem of deciding whether $\scost_{I/O}(L(\A))>0$ is EXPTIME-complete.
\end{theorem}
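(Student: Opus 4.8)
The plan is to derive both the time bound for computing $\scost_{I/O}(L(\A))$ and the EXPTIME-completeness of its positivity test from the machinery already established, with the single-exponential bound hinging on the fact that the parity-MDP produced by Theorem~\ref{thm:sensing to parity-MDP} has only \emph{polynomially many} ranks.

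For the upper bound I would first invoke Theorem~\ref{thm:sensing to parity-MDP} to construct, from $\A$, a parity-MDP $\M$ with $\costs(\M)=\scost_{I/O}(L(\A))$, whose state space has size $2^{O(|\A|)}$ and whose number of parity ranks $d$ is polynomial in $|\A|$. I then run the algorithm of Theorem~\ref{thm:parityMDP infinite NP cap co-NP} on $\M$. The only step of that algorithm that is not polynomial in $|\M|$ is the preprocessing that removes the states not sure-winning for Player~1 in $\M^{\P}$, which amounts to solving a two-player parity game on $\M$. Crucially, since $d$ is polynomial in $|\A|$, this parity game can be solved \emph{deterministically} in time $|\M|^{O(d)} = \left(2^{O(|\A|)}\right)^{O(\mathrm{poly}(|\A|))} = 2^{O(\mathrm{poly}(|\A|))}$, i.e.\ singly exponential; this is what lets us avoid the naive $\mathrm{NEXP}\cap\mathrm{coNEXP}$ bound that a black-box application of the $\mathrm{NP}\cap\mathrm{coNP}$ result to an exponential-size input would yield. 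All remaining steps (finding the maximal \gecs and solving the resulting mean-payoff MDP $\M'$ via Theorem~\ref{thm:solving MDP in P}) run in time polynomial in $|\M|$, hence singly exponential in $|\A|$. This produces the exact rational value $\scost_{I/O}(L(\A))$ in singly-exponential time, and comparing it to $0$ places the decision problem $\scost_{I/O}(L(\A))>0$ in EXPTIME.

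For the lower bound I would exhibit an EXPTIME-hardness reduction for the positivity question. Since safety specifications are a special case of $\DPW$ specifications, it suffices to reduce from the EXPTIME-hard sensing problem on safety properties established in~\cite{AKK15}: given such an instance I construct a $\DPW$ $\A$ so that $\scost_{I/O}(L(\A))>0$ exactly when the safety instance is a yes-instance. Conceptually, deciding positivity asks whether the system can realize the specification while asymptotically sensing nothing, which is precisely the partial-information (belief-state) question whose exponential blow-up—produced by the $\UPW$-to-$\DPW$ determinization inside Theorem~\ref{thm:sensing to parity-MDP}—drives the hardness.

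The step I expect to be the main obstacle is the lower bound, and specifically pinning the threshold at $0$: the reduction must force \emph{every} realizing transducer for a yes-instance to sense with positive asymptotic density, while guaranteeing that a no-instance admits transducers whose sensing cost tends to $0$ (which, as the example preceding the theorem shows, need not be attained by any single finite transducer). Handling this gap between the infimum and its attainment—so that positivity of the infimum encodes exactly the hard instances—is the delicate part; the upper bound, by contrast, is essentially careful bookkeeping of the complexities already granted by Theorems~\ref{thm:sensing to parity-MDP}, \ref{thm:parityMDP infinite NP cap co-NP}, and~\ref{thm:solving MDP in P}.
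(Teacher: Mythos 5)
Your proposal matches the paper's proof: the upper bound is obtained exactly as you describe, by running the algorithm of Theorem~\ref{thm:parityMDP infinite NP cap co-NP} on the parity-MDP of Theorem~\ref{thm:sensing to parity-MDP} and solving the single parity game deterministically, which is singly exponential precisely because the number of ranks is polynomial in the size of $\A$ while only the state space is exponential. For the lower bound the paper simply cites~\cite{AKK15}, where deciding positivity of the sensing cost is already shown EXPTIME-hard for looping automata (a special case of \DPWs), so the delicate threshold-at-zero reduction you anticipate having to construct is already contained in that cited result and requires no new work.
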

\vspace*{-13pt}
\begin{proof}
We obtain from $\A$ a parity-MDP $\M$ as per Theorem~\ref{thm:sensing to parity-MDP}. Observe that the algorithm in the proof of Theorem~\ref{thm:parityMDP infinite NP cap co-NP} essentially runs in polynomial time, apart from solving a parity game, which is done in NP$\cap$co-NP. However, deterministic algorithms for solving parity games run in time polynomial in the number of states, and singly-exponential in the number of parity ranks. Since the number of parity ranks in $\M$ is polynomial in that of $\A$, we can find $\costs(\M)$ in time singly-exponential in the size of $\A$. Since $\costs(\M)=\scost_{I/O}(L(\A))$, we are done.

For the lower bound, we note that the problem of deciding whether $\scost_{I/O}(L(\A))>0$ is EXPTIME-hard even for a restricted class of automata, namely looping automata~\cite{AKK15}. 
\end{proof}

The input to the synthesis problem is typically given as an \LTL formula, rather than a \DPW. Then, the translation from $\LTL$ to a $\DPW$ involves a doubly-exponential blowup. Thus, a naive solution for computing the sensing cost of a specification given by an $\LTL$ formula is in 3EXPTIME. However, by translating the formula to a \UPW, rather than a \DPW, we show how we can avoid one exponent, thus matching the 2EXPTIME complexity of standard Boolean synthesis.
\vspace*{-5pt}
\begin{theorem}
\label{thm:LTL sensing cost}
Consider an \LTL specification $\psi$ over $I\cup O$. We can compute $\scost_{I/O}(L(\psi))$ in doubly-exponential time.
\end{theorem}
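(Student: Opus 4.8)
The plan is to avoid ever determinizing $\psi$ into a \DPW, since that determinization is exactly what costs the third exponent in the naive bound. The key observation is that the first step of the sensing construction in the proof of Theorem~\ref{thm:sensing to parity-MDP} already turns the input \DPW into a \UPW, by adding universal branching to account for the inputs that are not sensed. Universal branching composes with universal branching, so if we \emph{start} from a universal automaton we can carry out the whole sensing transformation while paying for determinization only once, and on a singly-exponential object.

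Concretely, I would first translate $\psi$ into a universal co-B\"uchi automaton $\U_\psi$ (a \UPW with a constant number of ranks) with $L(\U_\psi)=L(\psi)$, via the standard route: translate $\neg\psi$ into an \NBA of size $2^{O(|\psi|)}$ and dualize, obtaining $\U_\psi$ of size $n=2^{O(|\psi|)}$ and $O(1)$ ranks. This automaton is only \emph{singly}-exponential in $|\psi|$, as opposed to the doubly-exponential \DPW one would get by determinizing. I would then run the construction of Theorem~\ref{thm:sensing to parity-MDP} with $\U_\psi$ in place of the \DPW: augment the alphabet by a $\tIN$ sensing component and add universal transitions over the unsensed inputs, yielding a \UPW $\U'$ over $\tIN\times\tIN\times\tOUT$ whose size is still $2^{O(|\psi|)}$ and whose number of ranks is still $O(1)$.

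Next I would determinize $\U'$ into a \DPW $\D$, incurring a single exponential blow-up (Safra/Piterman): $\D$ has $2^{O(n\log n)}=2^{2^{O(|\psi|)}}$ states and $O(n)=2^{O(|\psi|)}$ ranks. From $\D$ I would build the parity-MDP $\M$ exactly as in Theorem~\ref{thm:sensing to parity-MDP}, so that $\costs(\M)=\scost_{I/O}(L(\psi))$, with $\M$ of doubly-exponential size and only singly-exponentially many ranks. Finally I would invoke the algorithm behind Theorem~\ref{thm:compute sensing cost} to compute $\costs(\M)$: apart from the parity-game preprocessing it runs in time polynomial in the number of states, and the parity game is solved in time polynomial in the states and singly-exponential in the number of ranks. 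Since $\M$ has doubly-exponentially many states but only singly-exponentially many ranks, both factors are doubly-exponential in $|\psi|$, giving the claimed bound.

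The step I expect to require the most care is the bookkeeping on ranks rather than on states. One must verify that translating $\psi$ to a universal automaton keeps the number of ranks bounded, and that determinization raises it only to $O(n)$, i.e.\ singly-exponential in $|\psi|$; this is what confines the singly-exponential-in-ranks factor of the parity-game solver to a single extra exponent. The state blow-up, by contrast, is harmless, since the sensing-cost algorithm of Theorem~\ref{thm:compute sensing cost} is merely polynomial in the number of states.
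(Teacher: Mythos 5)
Your proposal is correct and follows essentially the same route as the paper's own proof: translate $\psi$ to a \UPW by negating, building an \NBA, and dualizing; add the universal sensing transitions directly to that \UPW; and determinize only once, yielding a doubly-exponential \DPW and hence a doubly-exponential overall bound. Your additional bookkeeping on the number of parity ranks (constant in the universal co-B\"uchi automaton, singly-exponential after determinization, so the rank-exponential factor of the parity-game solver stays within the second exponent) is a detail the paper leaves implicit, and it is a welcome sharpening rather than a deviation.
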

\vspace*{-12pt}
\begin{proof}
We start by translating $\psi$ to a $\UPW$ $\A$ of size single-exponential in the size of $\psi$. This can be done, for example, by translating $\neg \psi$ to a nondeterministic \buchi automaton~\cite{VW94} and dualizing it. We then follow the proof of Theorem~\ref{thm:compute sensing cost}, by adding the universal transitions described there directly to the $\UPW$ $\A$. Thus, when we finally determinize the $\UPW$ to a $\DPW$, the size of the $\DPW$ is doubly-exponential, 
so 
computing the sensing cost can also be done in doubly-exponential time.
\end{proof}

\small
\bibliographystyle{plain}
\bibliography{../ok}

\normalsize

\appendix

\section{Calculating the Sure Cost in the Infinite-Memory Case}
\label{app inf memory}

Our algorithm uses the notion of attractors, defined below.
Consider a set $R\subseteq S$. A {\em environment attractor} for $R$, denoted $\attr_{\env}(R)$ is defined inductively as follows. First, $T_0=R$. Now, for every $i>0$, let $T_{i+1}=T_i\cup \{s\in S_1: \text{ for every } a\in A_1(s),\text{ we have that }\delta_1(s,a)\in T_i\} \cup \{s\in S_2: \text{ there exists }a\in A_2(s)\text{ s.t. }\delta_2(s,a)\in T_i \text{ and \MDPProb(s,a)>0}\}$. Then, $\attr_{\env}(R)=\bigcup_{i}T_i$. It is well known that $\attr_{\env}(R)$ can be computed in time polynomial in the description of $\M$. We analogously define the {\em system attractor} $\attr_{\sys}(R)$, by swapping the roles of Players $1$ and $2$.

\subsection{Finding the Maximal \gecs of $\M$}
\label{app inf memory finding gecs}

In order to find the maximal \gecs of $\M$, we proceed as follows. 
\begin{enumerate}
\item Compute the maximal EC decomposition of $\M$.
\item For every maximal EC $C$, if $C$ is not good (i.e., the maximal parity rank in $C$ is odd), remove it from the graph $\attr_{\env}(C^{\max})$ and go to (1).
\item Once all the remaining components are good, return them.
\end{enumerate}
Note that upon returning to step (1) from (2), it may be that the graph of $\M$ is not connected. Still, we find the decomposition in all the components.

It is not hard to see that all the steps of the algorithm are polynimial. In particular, finding the maximal EC decomposition of $\M$ takes polynomial time \cite{CH14}.

\subsection{Proof of Lemma \ref{lem: GEC value}}

Consider a memoryless strategy $f^C$ that maximizes the probability to reach $C^{\max}$, and a memoryless strategy $g$ whose expected cost in $\M^{\MDP}|_C^s$ is $v(s)=v(C)$. By Theorem~\ref{thm:solving MDP in P}, such a strategy $g$ exists.

We construct an infinite-memory strategy $h$ that works in phases, as follows. In phase 1, $h$ works in iterations. In iteration $i$, the strategy $h$ plays $g$ for $2^{2^i}$ steps. Then, $h$ plays $f^C$ for $\gamma_\epsilon \cdot n\cdot 2^i$ steps, where $\gamma_\epsilon$ is a constant we determine later and $n$ is the number of states of $\M$. If, during these $2^{2^i}+\gamma_\epsilon 2^i$ steps, the generated play reached $C^{\max}$, then we proceed to the next iteration. Otherwise, $h$ goes to phase 2, in which it plays a parity-winning strategy (which exists, since every state in $\M$ is parity winning).

Clearly, if the play generated by $h$ never reaches Phase 2, then playing $g$ for $2^{2^i}$ steps is the dominant factor, and we have that $\cost(h)=\cost(g)\le v(s)$. Thus, it remains to bound the probability that the play reaches Phase 2. 
Denote by $\lambda$ the maximal probability that a play of $f^C$ does not reach $C^{\max}$ within $n$ steps, where the maximum is taken over all states of $C$. Since $C$ is strongly connected, it follows that $0\le \lambda<1$. Thus, the probability of not reaching Phase 2 is bounded from below by $\prod_{i=1}^\infty (1-\lambda^{\gamma_\epsilon 2^i})$. The latter expression converges to a number $p$ in $(0,1]$ that is 
inversely-related to $\gamma_\epsilon$. Therefore, by setting $\gamma_\epsilon$ large enough, we can lower the probability of reaching Phase 2 arbitrarily. Since the cost of a play after reaching Phase 2 is bounded from above by $W$, the claim follows.

\subsection{A proof that $\costs(\M)=\cost(\M') $}

We start with the ``easy'' direction, proving that $\costs(\M)\ge \cost(M')$. Consider a winning strategy $f$ for $\M$. With probability $1$, the play of $f$ in $\M$ reaches and stays in some \gec $C$. From every state in $C$, the minimal expected cost (when staying in $C$) is $v(C)$. Indeed, $v(C)$ is the cost of an MDP without the parity condition, which can only lower the minimal expected cost.
 Thus, we have that $\cost_\M(f)\ge \sum_{C\text{ is a \gec}}\Pr(f \text{ reaches and stays in }C)\cdot v(C)$.

Consider the strategy $f$ as a strategy for $\M'$. Then, $\cost_{\M'}(f)=\sum_{C\text{ is a \gec}}\Pr(f \text{ reaches and stays in }C)\cdot v(C)$, and we conclude that $\costs(\M)\ge \cost(\M')$.

For the other direction, we show that $\costs(\M)\le \cost(\M')$. Since $\M'$ is an MDP, then there exists an optimal memoryless strategy $f'$ such that $\cost_{\M'}(f')=\cost(\M')$. We show that for every $\epsilon>0$, there exists a winning strategy $f$ for $\M$ such that $\cost_{\M}(f)\le \cost_{\M'}(f')+\epsilon$. 

Observe that since $f'$ is memoryless and optimal, there exists a set of ECs ${\C}$  such that for every $C\in \C$, once $f'$ reaches a state $s\in C$, it stays in $C$ forever. Moreover, observe that every $C\in \C$ must be a \gec. Indeed, the states outside a \gec in $\M'$ have value $2W+1$, but from every state in $\M$ there exists a strategy that is parity-winning, and therefore ensures that a \gec is reached. Thus, if $f'$ gets stuck in an EC that is not good, we can modify it to reach a \gec, thus decreasing its cost. 

Let $\epsilon>0$. There exists some $N_0\in \Nat$ such that after $N_0$ steps, w.p. at least $1-\epsilon'$ a play in $\M'_{f'}$ reaches a \gec in $\C$ (for $\epsilon'>0$ which we will fix later). We obtain $f$ from $f'$ as follows. $f$ simulates $f'$ for $N_0$ steps. During this simulation, whenever $f$ reaches a \gec $C\in \C$, $f$ starts playing the strategy described in the proof of Lemma~\ref{lem: GEC value} for $\epsilon'$. After $N_0$ steps, $f$ plays a parity-winning strategy. 

Clearly $f$ is parity-winning. In addition, by our choice of $N_0$ and by Lemma~\ref{lem: GEC value}, it follows that w.p. at least $1-\epsilon'$, the cost of $f$ is at most $\cost_{\M'}(f)+\epsilon'$. Thus, $\cost_{\M}(f)\le (1-\epsilon')(\cost_{\M'}(f)+\epsilon')+\epsilon'|W|$, and for a small enough $\epsilon'$, this is at most $\cost_{\M'}(f)+\epsilon$.

\section{Calculating the Sure Cost in the finite-Memory Case}
\label{apx:finite memory}

\subsection{Proof of Lemma~\ref{lem:existence of max SGEC}}
\label{apx:existence of max SGEC}
Assume w.l.o.g that the maximal odd priority in $C$ is at least that of $D$. Let $f,g$ be witnesses to $C$ and $D$ being \sgec, respectively. We construct a witness $h$ to $C\cup D$ being a \sgec. In every state $q\in C$, $h$ behaves as $f$ does. In a state $s\in D\setminus C$, $h$ proceeds as follows. (1) It attempts to reach a state $s'\in C\cap D$ (from which it behaves as $f$) within $n_0\in \Nat$ steps, for a large enough $n_0$ such that the probability of reaching $C$ is positive (which exists, since $C\cup D$ is an EC). (2) If $C$ was not reached within $n_0$ steps, $h$ plays $g$ until $C^\me$ is reached, and goes back to (1). 

Observe that $C^\me\subseteq (C\cup D)^\me$. Clearly, the play under $h$ reaches $C^\me$ w.p. 1. Moreover, if the play does not reach $C^\me$, then it is winning in the parity condition. Indeed, if the play under $h$ reaches $C$, then this holds (since $f$ is a witness for $C$ being a \sgec). Otherwise, the play of $h$ either reaches $D^\me$ infinitely often, in which case it is winning in the parity condition, or it plays as $g$ and does not reach $D^\me$, in which case it is parity winning, since $g$ is a witness for $D$ being a \sgec. 
We conclude that $C\cup D$ is a \sgec.

\subsection{Proof of Lemma~\ref{lem:verify SGEC}}
\label{apx:verify SGEC}
Our solution proceeds as follows. We start by reducing the problem of deciding whether $C$ is a \sgec to the problem of deciding whether there is a winning strategy in a parity-\buchi game, using techniques from~\cite{CD11}. We then show how the latter can be solved by a reduction to positive Mean-Payoff parity games.

A parity-\buchi game is a two player game $G=\zug{S_1,S_2,s_0,\Act_1,\Act_2,\delta_1, \delta_2,(\alpha,\beta)}$ that is similar to a parity game, with the exception that the winning condition is composed of two conditions: $\alpha$ is a parity ranking function, and $\beta\subseteq Q$ is a set of {\em accepting states}. A play of $G$ is winning for Player~1 iff it satisfies the parity condition $\alpha$, and visits $\beta$ infinitely often.

We start by describing a reduction from the problem of deciding whether $C$ is a \sgec to the problem of solving a parity-\buchi game. First, we check that $C$ is a \gec. If $\max_{s\in C}\set{\alpha(s)}$ is odd, then $C$ is not a \sgec and we are done.

Consider the parity game $\M^\P|_C$. We obtain from $\M^\P|_C$ a parity-\buchi game $G$ as follows. 
First, we change every state in $C^{\me}$ to a \buchi accepting sink (while keeping the parity rank).

For every state $s$ of Player~2 that is not in $C^{\me}$, we replace $s$ with the gadget in Figure~\ref{fig:gadget}.
\begin{figure}[ht]
\input{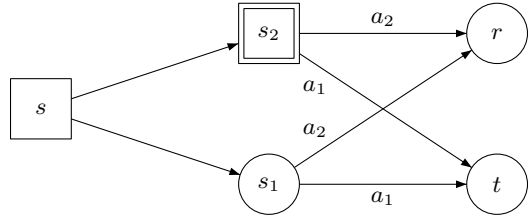}
\caption{Gadget for the reduction in Lemma~\ref{lem:verify SGEC}. In $\M$, we have $\delta(s,a_1)=t$ and $\delta(s,a_2)=r$.}
\label{fig:gadget}
\end{figure}

Formally, we add the states $s_1,s_{2}$, where $s_1$ is a Player~1 state and $s_{2}$ is a Player~2 state, whose successors are those of $s$ (with the same available actions), and the successors of $s$ are $s_1$ and $s_{2}$.

We set the parity ranks of the gadget to be $\alpha(s_1)=\alpha(s_{2})=0$, and for the \buchi objective, we set $s_2\in \beta$ and $s_1\notin \beta$.

We claim that $C$ is a \sgec iff Player~1 wins in $G$ from every state. For the first direction, assume $C$ is a \sgec, and let $f$ be a witness strategy. Thus, $f$ is finite memory strategy that reaches $C^{\me}$ w.p. 1, and wins in the parity condition in every play that does not reach $C^{\me}$.

We obtain from $f$ a strategy $g$ for Player~1 in $G$ as follows. $g$ plays similarly to $f$, unless a state $s_1$ as in the gadget  is reached, for some environment state $s$. Then $g$ chooses the neighbor that minimizes the distance to $\attr_\sys(C^{\me})$ (we assume w.l.o.g that in $\attr_\sys(C^{\me})$, the strategy $f$ leads surely to $C^{\me}$). We claim that $g$ wins parity+\buchi in $G$. Indeed, consider a strategy $g'$ for Player~2 in $G$, and consider the play $\rho$ induced by $g$ and $g'$. Note that $g'$ induces a strategy in $\M^\P|_C$ by assigning each state $s\in S_2$ the action $g(s_2)$.
Assume by way of contradiction that $\rho$ is not winning for Player~1. Thus, either the \buchi condition or the parity conditions do not hold. If the \buchi condition does not hold, then after a finite prefix, for every environment state $s$, $g'$ moves the play to $s_1$ in the gadget (since $s_2\in \beta$). Thus, however, eventually Player~1 forces the play to $C^{\me}$, which are \buchi-winning sinks, and this the \buchi condition and the parity conditions are satisfied. Thus, the \buchi condition holds. If the parity condition does not hold, then the play does not reach $C^{\me}$. Since $f$ is a witness strategy for $C$ being a \sgec, then every play in $\M^\P|_C$ induced by $f$ and does not reach $C^{\me}$ is parity winning. Thus, the play in $G$ induces similar parity ranks, with the exception of padding $0$ ranks within the gadgets. In particular, this play is also parity winning in $G$. Since this is true for every strategy $g'$, we conclude that $f$ is parity-\buchi winning in $G$.

conversely, assume that $f$ is a parity-\buchi winning strategy in $G$. In addition, we assume that $f$ is finite memory. Since parity-\buchi is an $\omega$-regular winning condition, then Player~1 has a finite-memory winning strategy. The strategy $f$ induces a strategy for Player~1 in $\M|_C$. We claim that this is a witness for $C$ being a \sgec.
Indeed, similarly to the above, it is easy to see that $f$ is parity winning if $C^{\me}$ is not reached. It remains to prove that $C^{\me}$ is reached w.p. 1. 

Since $f$ has finite memory, then there exists $n\in \Nat$ such that for every state $s$ in $G$, if Player~2 chooses $t_1$ from every environment state $t$ for $n$ steps, then $f$ reaches $C^{\me}$. However, w.p. 1, a stochastic environment chooses the same $n$ choices that $f$ would have chosen in the above $t_1$ states. Thus, w.p. 1, $f$ reaches $C^{\me}$.

This completes the reduction to parity-\buchi games.

Next, we reduce parity-\buchi games to Mean-payoff parity games by assigning every state in $\beta$ payoff 1, and the rest payoff 0. Then, the goal is to win parity while having strictly positive long-run mean-payoff. These games can be solved in NP$\cap$co-NP~\cite{CD11}.

In addition, in case $C$ is not a \sgec, our solution finds the winning states for Player~1, which are the super-good states.

\subsection{Proof of Theorem~\ref{thm:maximal SGEC decomposition}}
\label{apx:maximal SGEC decomposition}
Intuitively, our algorithm works in two phases. First, for every odd rank $k$, we find the maximal \sgec whose maximal odd rank is $k$. Then, we choose among the \sgec the maximal ones. We start by describing a subroutine for the first phase.

Let $d$ be the maximal parity rank in $\M$, and consider an odd rank $k\in \set{-1,\ldots,d}$. We compute the maximal \sgec with maximal odd rank $k$ as follows.
\begin{enumerate}
\item Compute the maximal EC decomposition $\C$.
\item For every EC $C\in \C$,
\begin{enumerate}
\item Let $odd_{>k}(C)=\set{s\in C: \alpha(s)>k \text{ and }\alpha(s)\text{ is odd}}$. If $odd_{>k}(C)\neq \emptyset$, remove $\attr_{\env}(odd_{>k})$ from $C$, and go to (1).
\item Decide if $C$ is a \sgec. If it is, return it. Otherwise, find the set $W$ of super-good states, remove $\attr_{\env}(C\setminus W)$ from $C$, and go to (1).
\end{enumerate}
\end{enumerate}
Next, we run this subroutine for every odd $k\in \set{-1,\ldots ,d}]$ to obtain \sgec $C_1,...,C_m$. Finally, for every $C_i,C_j$, if $C_i\subseteq C_j$, we remove $C_j$ from the list. 

Clearly this algorithm has polynomially many iterations, and in each iteration we solve an NP$\cap$co-NP problem, as per Lemma~\ref{lem:verify SGEC}. Thus, the algorithm solves the problem in NP$\cap$co-NP.

It remains to prove the correctness of the algorithm.
By Lemma~\ref{lem:verify SGEC}, every component that is returned in the subroutine is a \sgec. Consider a \sgec $C$ with maximal odd rank $k$. In iteration $k$ of the algorithm, none of the states of $C$ are removed in steps 2a and 2b. Thus, the subroutine returns a \sgec $D$ such that $C\subseteq D$. Finally, by Lemma~\ref{lem:existence of max SGEC}, if $C_i\cap C_j\neq \emptyset$, then there exists a \sgec $E$ such that $C_i\cup C_j\subseteq E$. Thus, $E$ is also returned in the list, and will replace $C_i$ and $C_j$.
We conclude that the returned list contains exactly the maximal \sgec of $\M$.

\subsection{Proof of Lemma~\ref{lem:finite memory EC prob 1}}
\label{apx:finite memory EC prob 1}
Let $f$ be a finite-memory strategy with memory $M$. Consider a history $h\in S^*\times S_1$. Let $m\in M$ be the memory element that $f$ reaches after reading $h$, we define the strategy $f_h$ to be $f$ when starting from $m$. Note that the set $F=\set{f_h:h\in S^*\times S_1}$ is finite, since $M$ is finite.
We claim that there exists $g\in F$ that satisfies the conditions of the lemma. 

Indeed, assume by way of contradiction that for every $g\in F$ we have that $\Pr_{\M^s}(\Inf(g)=C)<1$. Thus, there exists $\epsilon>0$ such that $\Pr_{\M^s}(\Inf(g)=C)< 1-\epsilon$ for every $g\in F$. It follows that there exists $\delta>0$ such that for every history $h$, w.p. at least $\delta$ the strategy $f_h$ from $s$ reaches either a state $t\notin C$ or a state $t'\in C$ such that there exists $s'\in C$ that is not reachable from $t'$ under $f_h$. Since $\delta$ is independent of $h$, and since this is true for every $h$, we get that $\Pr_\M(\Inf(f)=C)=0$, in contradiction to the assumption. 

Let $G=\set{g\in F: \Pr_{\M^s}(\Inf(g)=C)=1}$, then we conclude that $G\neq \emptyset$. Assume by way of contradiction that for every $g\in G$ it holds that there exists a play of $g$ from some state $s\in C$ that leaves $C$ (which happens after a finite number of steps). Thus, there exists some $\delta>0$ such that w.p. at least $\delta$ (independent of $g$), for every $g'\in G$ and every state $s\in C$ a play of $g'$ leaves $C$ (since every $g'\in G$ visits every state of $C$ w.p. 1). This contradicts the fact that $\Pr_{\M^s}(\Inf(g)=C)=1$. 
We conclude that there exists $g\in G$ such that every play of $g$ stays in $C$ forever. 

In addition, since $f$ is parity winning, and the parity condition is independent of the history, then $g$ is parity winning too.

\subsection{Proof of Lemma~\ref{lem:finite memory stays in SGEC}}
\label{apx:finite memory stays in SGEC}
Let $g$ be a strategy obtained as per Lemma~\ref{lem:finite memory EC prob 1}. Thus, $\Pr_{\M^s}(\Inf(g)=C)=1$ for every $s\in C$,  every play of $g$ from $s$ stays in $C$, and $g$ is parity winning. We show that $C$ is a \sgec by showing that $g$ is a witness thereof.
Indeed, w.p. 1 $g$ visits every state of $C$, and in particular $g$ reaches $C^{\me}$ w.p. 1. In addition, $g$ is parity-winning, so every play of $g$ is parity winning, in particular plays that do not reach $C^{\me}$.
\qed

\section{Applications}
\label{app app}

\subsection{Automata, and the Boolean Synthesis Problem}
\label{sec:application defs}
\stam{
For finite sets $I$ and $O$ of input and output signals, respectively, an {\em $I/O$  transducer} is $\T=\zug{I,O,Q,q_0,\delta,\rho}$, where $Q$ is a set of states, $q_0 \in Q$ is an initial state, $\delta: Q\times \tIN \to Q$ is a total (deterministic) transition function, and $\rho:Q\to \tOUT$ is a labeling function on the states. The run of $\T$ on a word $w=i_0 \cdot i_1 \cdots \in \tINo$ is the sequence of states $q_0,q_1,\ldots$ such that $q_{k+1} = \delta(q_k,i_{k})$ for all $k \geq 0$. The {\em output} of $\T$ on $w$ is then $o_1,o_2,\ldots\in \tOUTo$ where $o_k=\rho(q_{k})$ for all $k\ge 1$. Note that the first output assignment is that of $q_1$, and we do not consider $\rho(q_0)$. This reflects the fact that the environment initiates the interaction. The {\em computation of $\T$ on $w$\/} is then 
$\T(w)=i_0\cup o_1,i_1\cup o_2,\ldots \in (2^{I \cup O})^\omega$.
When $Q$ is a finite set, we say that the transducer is  finite.

The synthesis problem gets as input a specification $L \subseteq (2^{I \cup O})^\omega$ and generates a transducer $\T$ that realizes $L$; namely, all the computations of $\T$ are in $L$. The language $L$ is typically given by an LTL formula \cite{Pnu81} or by means of an automaton of infinite words. 
}

An {\em automaton\/} is a tuple 
$\A=\langle\Sigma,Q,q_0,$ $\delta,\alpha\rangle$, where $Q$ is a finite set of states, $q_0 \in Q$ is an initial state, $\delta: Q\times \Sigma \to 2^Q$ is a transition function, and 
$\alpha$ is an acceptance condition. We define some acceptance conditions below.
The automaton $\A$ may run on finite or infinite words. 
A run of $\A$ on a finite word $w=\sigma_1 \cdot \sigma_2 \cdots \sigma_n\in \Sigma^*$ is a sequence of states $r=r_0,r_1,\ldots,r_n$ such that $r_{i+1} \in \delta(r_i,\sigma_{i+1})$ for all $0 \leq i < n$. When $w$ is infinite, so is a run of $\A$ on it. For an infinite run $r$, we denote by $\Inf(r)$ the set of states that $r$ visits infinitely often.
 
We consider two acceptance conditions. When $\A$ runs on finite words, we have that $\alpha \subseteq Q$ is a set of accepting states. Then, a finite run $r_0,r_1,\ldots,r_n$ is accepting if $r_n \in \alpha$. When $\A$ runs on infinite words, then $\alpha:Q\to \set{0,...,d}$ is a {\em parity\/} acceptance condition. For a state $q\in Q$, we refer to $\alpha(q)$ as the {\em rank} of $q$. Then, an infinite run $r$ is accepting if $\maxs{\alpha(q): q\in \Inf(r)}$ is even.

The automata we consider are universal. Thus, 
a word $w\in \Sigma^\omega$ is accepted if all the runs of $\A$ on it are accepting.  
The language of $\A$, denoted $L(\A)$, is the set of words that $\A$ accepts. 
If $|\delta(q,\sigma)|=1$ for every $q\in Q$ and $\sigma\in \Sigma$, we say that $\A$ is {\em deterministic}. Note that in this case, $\A$ has exactly one run on every word.


The classical solution to the Boolean synthesis problem proceeds as follows. Consider a specification \DPW
 $\A=\zug{\tIN\times\tOUT,Q,q_0,\delta,\alpha}$. We obtain from $\A$ a parity game
 $\G=\langle Q \times 2^I,Q,q_0,2^O,2^I, $ $\delta_1,\delta_2,\alpha'\rangle$, where $\delta_1(\zug{q,i},o)=\delta(q, i \cup o)$, and $\delta_2(q,i)=\zug{q,i}$. Thus, Player~2, the environment, controls the inputs and his actions correspond to assignments to the input signals. His states are the states of $\A$, and he moves to states that maintain the assignment he gives to the input signals. Then, Player~1, the system, controls the outputs and his actions correspond to assignments to the output signals. He moves in states that maintain the assignment to the input signals given by Player~2, and his transitions update the state of $\A$. Then, $\alpha'$ in induced by $\alpha$. Formally, for every $q \in Q$ and for every $i \in 2^I$, we have that $\alpha'(q)=\alpha'(\zug{q,i})=\alpha(q)$. It is not hard to see that a winning strategy for Player~1 in $\G$ induces a transducer that realizes $\A$ \cite{PR89a}. 
Finding a winning strategy for Player~1 amounts to solving a turn-based parity game, whose complexity is NP$\cap$co-NP. Alternatively, deterministic algorithms for solving parity games run in time polynomial in the number of states, and singly-exponential in the number of parity ranks. When the starting point is an LTL formula $\psi$, the translation to a \DPW involves a doubly-exponential blow up, but the index of the \DPW is only exponential, so the problem is 2EXPTIME-complete \cite{Ros92}.

\subsection{Synthesis with Penalties}
\label{app penalties}

Let $\U_i=\zug{\tIN\times\tOUT,Q^i,q^i_0,\delta^i,\alpha^i}$
Let $S=Q \times S_1 \times \cdots S_m$ and $s_0=\zug{q_0,q_0^1,\ldots,q_0^m}$. We define the parity-MDP $\M=\zug{S \times 2^I,S,s_0,2^O,2^I,\delta_1,\delta_2,\MDPProb,\MDPcost,\alpha'}$ where for every $s=\zug{q,q^1,...,q^m}\in S$, $i\in \tIN$, and $o\in \tOUT$, we have the following. The transition functions are $\delta_1(\zug{s,i},o)=\zug{\delta(q,i \cup o),\delta^1(q^1,i \cup o),\ldots,\delta^m(q^m,i \cup o)}$, and $\delta_2(s,i)=\zug{s,i}$, the cost function is given by $\MDPcost(\zug{s,i})=0$ and $\MDPcost(s)=\sum_{j:q^j\in \alpha^j}\gamma(j)$, for the penally function $\gamma$, and the acceptance condition is $\alpha(s)=\alpha(\zug{s,i})=\alpha(q)$. Finally, we assume that the environment behaves uniformly. That is, in every step it outputs every $i\subseteq I$ with probability $2^{-|I|}$. Thus, $\MDPProb(s,i)=2^{-|I|}$. This assumption can easily be replaced by a different probabilistic model. 

It is easy to see that a winning strategy for Player~1 in $\M$ corresponds to a transducer that realizes $\A$, and that the cost of every computation is the average penalty along the computation. Thus, a solution to the synthesis with penalties problem amounts to solving $\M$. The size of $\M$ is polynomial in the size of the automata $\A,\U_1,\ldots \U_m$, and is exponential in $m$. However, we observe that the role of $\U_1,\ldots, \U_m$ is only for the purpose of costs, and does not affect the parity constraints. Thus, we can solve the problem in NP$\cap$co-NP in the size of the automata, and in time singly-exponential in $m$.
Finally, if $\A$ is obtained by translating an LTL formula $\psi$ into a DPW, then similarly to the case of Boolean synthesis, we can solve the problem in times doubly-exponential in the length of $\psi$, polynomial in $\U_1,\ldots,\U_m$, and singly-exponential in $m$.

\subsection{Proof of Theorem~\ref{thm:sensing to parity-MDP}}
\label{app sen construction}

We identify a subset $i\subseteq I$ with its characteristic function $i:I\to \set{0,1}$. 

Consider the $\DPW$ $\A=\zug{\tIN\times \tOUT,Q,q_0,\delta,\alpha}$. We obtain from $\A$ the $\UPW$ $\A'=\zug{\tIN\times \tIN\times \tOUT,Q,q_0,\delta',\alpha}$ with $\delta'$ defined as follows. Consider a letter $\zug{i,x,o}\in \tIN\times \tIN\times \tOUT$. We think of $i$ and $o$ as truth assignments for the input and output signals, respectively, and we think of $x$ as a set of sensed signals. Consider the set $\sfrac{i}{x}=\set{j\in \tIN: \forall p\in x,\ j(p)=i(p)}$. Intuitively, $\sfrac{i}{x}$ is the set of input assignments that agree with $i$ on all the signals in $x$. For a state $q\in Q$, we define $\delta'(q, \zug{i,x,o})=\set{\delta(q,(j,o)): j\in \sfrac{i}{x}}$. 

Intuitively, when thinking of $\A'$ as a game between the system and the environment, then at each step, the system chooses a set of sensed inputs $x$ and an output $o$. Then, the environment chooses a set of inputs $i$, but in the next step the system can only see the inputs in $i$ that are sensed in $x$, and thus moves universally with every input that agrees with $i$ on the sensed inputs in $x$.

We proceed to determinize $\A'$ to a DPW $\D=\zug{\tIN\times \tIN\times \tOUT,S,\rho,s_0,\beta}$. We then obtain from $\D$ a parity game, as described above, with Player~1 (the system) controlling the set of sensed inputs and the output, and Player~2 (the environment) controlling the concrete inputs. Formally, the game 
$G_\D=\zug{S_1\cup S_2,\start,\Act_1,\Act_2,\delta_1, \delta_2,\beta'}$ is defined as follows. The states are $S_1=(S\times \tIN\times \tIN)\cup \set{\start}$ and $S_2=S\times \tIN$. The actions for Player~1 in every state are $\Act_1=\tIN\times \tOUT$ and are $\Act_2=\tIN$ for Player~2 (we omit the state as the available actions are independent of the state). The transition function is defined as follows. For a state $\zug{s,x,i}\in S_1$ and action $\zug{x',o}\in A_1$ we have $\delta_1(\zug{s,x,i},\zug{x',o})=\zug{\rho(s,\zug{i,x,o}),x'}$ as well as $\delta_1(\start,\zug{x',o})=\zug{s_0,x'}$. 
For a state $\zug{s,x}\in S_2$ and action $i\in A_2$ we have $\delta_2(\zug{s,x},i)=\zug{s,x,i}$. 

Intuitively, the state $\zug{s,x,i}\in S_1$ represents that $\D$ is in state $s$, the system has chosen to sense the signals in $x$, and the environment gave the concrete input $i$. Then, the action $\zug{x',o}$ means that the system responded with output $o$, and chose to sense $x'$ in the next step, taking the game to the state $\zug{s',x'}$, where $s'=\rho(s,\zug{i,x,o})$. Then, in state $\zug{s',x'}$, the environment chooses a new concrete input $i'$.

We define the acceptance condition $\beta'$ as follows. For every $s\in S$ and $i,x\in \tIN$, we have $\beta'(\zug{s,x,i})=\beta'(\zug{s,x})=\beta(s)$, and we arbitrarily set $\beta'(\start)=0$ (since $\start$ is visited only once, this has no effect).

Note that crucially, for every $j,j'\in \sfrac{i}{x}$, the behavior of $G_\D$ from state $\zug{s,x,j}$ is identical to the behavior from $\zug{s,x,j'}$. This follows from the universal transitions in $\A'$. Thus, once Player~1 chooses $x$, the inputs that are not sensed do not play a role. This captures the fact that every winning strategy for the system must only rely on the values $j$ assigns to the sensed inputs $x$.

Finally, the parity-MDP $\M$ is obtained from $G_\D$ by fixing Player~2 with a uniform-stochastic strategy and adding costs according to the number of sensed inputs at each state. Recall that the actions of Player~2 are $\tIN$. Thus, in state $\zug{s,x}\in S_2$, the probability of Player~2 playing $j\in \tIN$ is $2^{-|I|}$. Note that by our observation above, every $j,j'\in \sfrac{i}{x}$, induce the same transitions. Thus, the probability of transition from state $\zug{s,x}$ to $\zug{s,x,j}$ is $2^{-|x|}$.

The cost function assigns cost $|x|$ to states $\zug{s,x}$ and $\zug{s,x,j}$, for every $s\in S$ and $j\in \tIN$.

We now proceed to analyze the correctness of the construction. Consider a (not necessarily finite) transducer $\T=\zug{I,O,T,t_0,\tau,\rho}$ that realizes the specification $\A$. We identify with $\T$ a strategy $f_\T$ for $\M$ as follows. 
In state $\start$ we have $f_\T(\start)=\zug{\sen(t_0),\rho(t_0)}$. Then, the strategy $f_\T$ keeps track of the state of $\T$ as follows. When $\T$ is in state $t$, and the state of the game is $\zug{s,x,i}$, let $t'=\tau(t,i)$. Then, we have that $f_\T(\zug{s,x,i})=\zug{\sen(t'),\rho(t')}$. Observe that $f_\T$ is essentially implemented by the transducer $\T$. In particular, if $\T$ has finite state space, then $f$ has finite memory.

We claim that $\costs(f_\T)=\scost(\T)$. We start by showing that $f_\T$ is sure winning in $\M$ (equivalently, that it is a winning strategy for Player~1 in $G_\D$). Consider an input sequence $\pi\in \tINs$, let $q$ and $t$ be the states that $\A$ and $\T$ reach, respectively, when they interact on $\pi$. let $x=\sen(t)$, then for every $i,j\in \tIN$ such that $j\in \sfrac{i}{x}$ we have that $\tau(t,i)=\tau(t,j)$. Thus, the behavior of $\T$ from $\delta(q,i\cup \rho(t))$ and from $\delta(q,j\cup \rho(t))$ is the same. It follows that $\T$ induces a realizing strategy for the UPW $\A'$ (and hence a winning strategy for $G_\D$), where the additional $\tIN$ component in the alphabet represents the sensing of the current state of $\T$. However, this is exactly the behavior prescribed by $f_\T$, so $f_\T$ is winning in $G_\D$. 

Next, observe that by the above, for every input sequence $\pi\in \tINo$, the (prefix of the) play of $G_\D$ induced by Player~1 playing $f_\T$ and Player~2 playing $\pi$ is $r=\start,\zug{s_1,x_1},\zug{s_1,x_1,\pi_1},...,$ $\zug{s_m,x_m},\zug{s_m,x_m,\pi_m}$, and we have that $\cost_m(f_\T,\pi)=\frac{1}{2m+1}(\sum_{k=1}^m 2\cdot|x_k|)$, while for the run $r=t_1,t_2,...,t_m$ of $\T$ on the first $m$ letters of $\pi$ we have that $\scost(r)=\frac{1}{m}\sum_{k=1}^m \scost(t_k)$. By the definition of $\T$ and $\M$, we have $\scost(t_k)=|x_k|=\MDPcost(\zug{s_k,x_k})=\MDPcost(\zug{s_k,x_k,\pi_k})$. Moreover, the probabilities of $\M$ imply that every $\pi$ such that $|\pi|=m$ is played w.p. $|\tIN|^{-m}$. Thus, by taking $m\to \infty$, we get $\costs(f_\T)=\scost(\T)$. 

Since this is true for every realizing transducer $\T$, it follows that $\costs(\M)\le\scost_{I/O}(L(\A))$.

Conversely, consider a strategy $f$ for $\M$. A-priori, $f$ can behave differently in states $\zug{s,x,i}$ and $\zug{s,x,j}$ for $j\in \sfrac{i}{x}$. However, as we observed above, the construction of $\A'$ (and thus of $\D$) implies that $f$ cannot decrease its cost by doing so, since the behavior of $\A'$ is the same in both states. Thus, we can assume w.l.o.g that $f$ only depends on the values $i$ assigns to the sensed inputs $x$. Now, $f$ induces a (possibly infinite) transducer $\T_f$ in an obvious manner - whenever $f$ outputs $\zug{x,o}$, the transducer outputs $o$. Similar arguments as the converse direction show that $\costs(f)=\scost(\T_f)$, and thus $\costs(\M)\ge\scost_{I/O}(L(\A))$, and we are done.

\end{document}